\newif{\ifdraft}\drafttrue
\newtheorem{lemma*}{Lemma}
\newtheorem{theorem}{Theorem}[section]
\newtheorem{lemma}[theorem]{Lemma}
\newtheorem{proposition}[theorem]{Proposition}
\newtheorem{corollary}[theorem]{Corollary}
\newtheorem{definition}[theorem]{Definition}
\theoremstyle{break}
\theoremstyle{definition}
\newtheorem{example}[theorem]{Example}
\newtheorem{remark}[theorem]{Remark}
\newtheorem{question}[theorem]{Question}
\renewcommand{\setminus}{\mysetminus}
\newenvironment{aw}{\noindent\color{magenta} AW : }{}
\newenvironment{AM}{\noindent\color{red} AM : }{}
\newcommand{\mysetminusD}{\raisebox{.8pt}{\hbox{\tikz{\draw[line width=0.6pt,line cap=round] (3.5pt,0pt) -- (0,5.2pt);}}}}
\newcommand{\mysetminusT}{\mysetminusD}
\newcommand{\mysetminusS}{\raisebox{.5pt}{\hbox{\tikz{\draw[line width=0.45pt,line cap=round] (2.2pt,0) -- (0,3.8pt);}}}}
\newcommand{\mysetminusSS}{\raisebox{.35pt}{\hbox{\tikz{\draw[line width=0.4pt,line cap=round] (1.5pt,0) -- (0,2.8pt);}}}}
\newcommand{\mysetminus}{\mathbin{\mathchoice{\mysetminusD}{\mysetminusT}{\mysetminusS}{\mysetminusSS}}}
\newcommand{\supp}{\mathop\mathrm{supp}}
\newcommand{\set}[2]{\left\{#1\; \middle|\; #2\right\}}
\newcommand{\oneset}[1]{\left\{\mathinner{#1}\right\}}
\newcommand{\abs}[1]{\left|\mathinner{#1}\right|}
\newcommand{\Abs}[1]{\left\Vert\mathinner{#1}\right\Vert}
\newcommand{\fg}{f.\,g.\xspace}
\newcommand{\gen}[1]{\left< \mathinner{#1} \right>}
\newcommand{\genr}[2]{\left< \, \mathinner{#1}\; \middle|\;\mathinner{#2} \, \right>}
\newcommand{\N}{\ensuremath{\mathbb{N}}}
\newcommand{\Z}{\ensuremath{\mathbb{Z}}}
\newcommand{\NP}{\ensuremath{\mathsf{NP}}\xspace} %
\renewcommand{\L}{\ensuremath{\mathsf{LOGSPACE}}\xspace} %
\newcommand{\TC}{\ensuremath{\mathsf{TC}^0}\xspace}
\newcommand{\Tc}[1]{\ensuremath{\mathsf{TC}^{#1}}\xspace}
\newcommand{\Ac}[1]{\ensuremath{\mathsf{AC}^{#1}}\xspace}
\newcommand{\Nc}[1]{\ensuremath{\mathsf{NC}^{#1}}\xspace}
\newcommand{\NC}{\ensuremath{\mathsf{NC}}\xspace}
\newcommand{\FO}{\ensuremath{\mathsf{FO}}\xspace}
\renewcommand{\P}{\ensuremath{\mathsf{P}}\xspace}
\renewcommand{\phi}{\varphi}
\newcommand{\eps}{\varepsilon}
\newcommand{\Sig}{\Sigma}
\newcommand{\Del}{\Delta}
\newcommand{\Oh}{\mathcal{O}}
\newcommand{\cC}{\mathcal{C}}
\newcommand{\BS}[2]{\ensuremath{\mathrm{\bf{BS}}_{#1,#2}}\xspace}
\newcommand{\WP}{\ensuremath{\mathrm{WP}}\xspace}
\newcommand{\CP}{\ensuremath{\mathrm{CP}}\xspace}
\newcommand{\PP}{\ensuremath{\mathrm{PP}}\xspace}
\newcommand{\CSMP}{\ensuremath{\mathrm{CSGMP}}\xspace}
\newcommand{\CMMP}{\ensuremath{\mathrm{CSMMP}}\xspace}
\newcommand{\ord}{\ensuremath{\mathop{\mathrm{ord}}}\xspace}
\newcommand{\oi}[1]{{#1}^{-1}}
\newcommand{\ssnq}{\subsetneqq}
\newcommand{\wt}[1]{\widetilde{#1}}
\newcommand{\sse}{\subseteq}
\newcommand\ie{i.\,e., }
\newcommand\eg{e.\,g.\xspace}
\renewcommand{\epsilon}{\varepsilon}
\tikzset{
	ncbar angle/.initial=90,
	ncbar/.style={
		to path=(\tikztostart)
		-- ($(\tikztostart)!#1!\pgfkeysvalueof{/tikz/ncbar angle}:(\tikztotarget)$)
		-- ($(\tikztotarget)!($(\tikztostart)!#1!\pgfkeysvalueof{/tikz/ncbar angle}:(\tikztotarget)$)!\pgfkeysvalueof{/tikz/ncbar angle}:(\tikztostart)$)
		-- (\tikztotarget)
	},
	ncbar/.default=0.5cm,
}
\begin{document}

	\title{The Conjugacy Problem\\ in Free Solvable Groups and Wreath Products of Abelian Groups is in \TC}

	\author{Alexei Miasnikov$^1$~$\qquad$~Svetla Vassileva$^2$~$\qquad$~Armin Wei\ss$^3$ \\[4mm]
		\small $^1$Stevens Institute of Technology, Hoboken, NJ, USA\\
		\small $^2$Champlain College, St-Lambert, QC, Canada\\
		\small $^3$Universität Stuttgart, Germany
	}

	\maketitle

	\begin{abstract}
		We show that the conjugacy problem in a wreath product $A \wr B$ is uniform-\TC-Turing-reducible to the conjugacy problem in the factors $A$ and $B$ and the power problem in $B$. If $B$ is torsion free, the power problem for $B$ can be replaced by the slightly weaker cyclic submonoid membership problem for $B$. Moreover, if $A$ is abelian, the cyclic subgroup membership problem suffices, which itself is uniform-\Ac0-many-one-reducible to the conjugacy problem in $A \wr B$.

		Furthermore, under certain natural conditions, we give a uniform \TC Turing reduction from the power problem in  $A \wr B$ to the power problems of $A$ and $B$. Together with our first result, this yields a uniform \TC solution to the conjugacy problem in iterated wreath products of abelian groups~-- and, by the Magnus embedding, also in free solvable groups. 
		\smallskip
		
		\noindent\textbf{Keywords:} wreath products, conjugacy problem, word problem, \TC, free solvable group

	\end{abstract}
	%%%%%%%%%%%%%%%%%%%%%%

	\section{Introduction}
	
	The conjugacy problem is one of Dehn's fundamental algorithmic problems in group theory \cite{Dehn11}. It asks on input of two group elements (given as words over a fixed set of generators) whether they are conjugate. The conjugacy problem can be seen as a generalization of the word problem, which on input of one word asks whether it represents the identity element of the group.
	In recent years the conjugacy problem gained an increasingly important role in non-commutative cryptography; see for example  \cite{CravenJ12,GrigorievS09,KoLCHKP00,SZ1,WangWCO11}. These applications use the fact that it is easy to create elements which are conjugate, but to check whether two given elements are conjugate might be difficult even if the word problem is easy. In fact, there are groups where the word problem is in polynomial time, but the conjugacy problem is undecidable \cite{Miller1}. 
	Moreover, there are natural classes, like polycyclic groups, which have a word problem in uniform \TC \cite{Robinson93phd}, but the conjugacy problem not even known to be in \NP. Another example for such a huge contrast is the Baumslag group, whose word problem is decidable in polynomial time, but the conjugacy problem is conjectured to be non-elementary \cite{DiekertMW14}.
	
	The class \TC is a very low complexity class consisting of those problems which can be recognized by a family of constant depth and polynomial size Boolean circuits which also may use majority gates. We only consider ($\mathsf{Dlogtime}$-)uniform \TC (and subsequently simply write \TC for uniform \TC). The word problem of abelian groups as well as integer arithmetic (iterated addition, multiplication, division) are problems in \TC. However, there are not many groups known to have conjugacy problem in \TC. Indeed, without the results of this paper, the Baumslag-Solitar groups \BS1q \cite{DiekertMW14} and nilpotent groups \cite{MyasnikovW17} are the only natural examples we are aware of. On the other hand, there is a wide range of groups having word problem in \TC: all polycyclic groups \cite{Robinson93phd} and, more generally, by a recent result all solvable linear groups \cite{KoenigL17}. Also iterated wreath products of abelian groups are known to have word problem in \TC \cite{KrebsLR07}.

	The study of the conjugacy problem in wreath products has quite a long history: in \cite{Matthews66} Matthews proved that a wreath product $A \wr B$ has decidable conjugacy problem if, and only if, both $A$ and $B$ have decidable conjugacy problem and $B$ has decidable \emph{cyclic subgroup membership problem} (note that in \cite{Matthews66} this is called \emph{power problem}). As a consequence, she obtained a solution to the conjugacy problem in free metabelian groups. Kargapolov and Remeslennikov generalized the result by establishing decidability of the conjugacy problem in free solvable groups of arbitrary degree \cite{KargapolovR66}.
	
	A few years later Remeslennikov and Sokolov \cite{RemSok} also generalized Matthews results to iterated wreath products by solving the cyclic subgroup membership problem in these groups. They also showed that the Magnus embedding \cite{Magnus39} of free solvable groups into iterated wreath products of abelian groups preserves conjugacy~-- thus, giving a new proof for decidability of the conjugacy problem in free solvable groups.

	Later, in \cite{MyasnikovRUV10} a polynomial time algorithm for the conjugacy problem in free solvable groups has been given and in \cite{Vassileva11} it is shown that  for iterated wreath products of abelian groups Matthews' criterion \cite{Matthews66} can be actually checked in polynomial time. In \cite{MiasnikovVW16} this has been further improved to \L.
	Recently, in \cite{GulSU17}, Matthews result has been generalized to a wider class of groups without giving precise complexity bounds~-- see the discussion in last section.

	In this work we use the same technique as in \cite{Matthews66,MiasnikovVW16,Vassileva11} in order to give a precise complexity version of Matthews result.
	Moreover, we extend the result of \cite{MiasnikovVW16,Vassileva11} also in several directions.
	As in \cite{MiasnikovVW16}, at some points we need a stronger hypothesis than in \cite{Matthews66} though: it is not sufficient to assume that the cyclic subgroup membership problem is decidable in \TC in order to reduce the conjugacy problem in a wreath product to the factors. Instead, we need the stronger power problem to be in \TC: on input of two group elements $b$ and $c$ compute an integer $k$ such that $b^k = c$. 
	More precisely, we establish the following results:
	\begin{itemize}
		\item The word problem of $A\wr B$ is uniform-\Ac0-Turing-reducible to the word problems of $A$ and $B$.
		
		\item There is a uniform \TC Turing reduction from the conjugacy problem in $A\wr B$ to the conjugacy problems in $A$ and $B$ together with the power problem in $B$. If $B$ is torsion-free, the power problem can be replaced by the cyclic submonoid membership problem; if $A$ is abelian, the power problem can be replaced by the cyclic subgroup membership problem.
		
		\item The cyclic subgroup membership problem in $B$ is \Ac0-reducible to the conjugacy problem in $A\wr B$ and, if $A$ is non-abelian, then also the cyclic submonoid membership problem in $B$ is \Ac0-reducible to the conjugacy problem in $A\wr B$ 
		
		\item Suppose the orders of torsion elements of $B$ are $\beta$-smooth for some $\beta \in \N$. Then, the power problem in $A\wr B$ is uniform-\TC-Turing-reducible to the power problems in $A$ and $B$.
		As a corollary we obtain that iterated wreath products of abelian groups have conjugacy problem in uniform \TC. Using the Magnus embedding \cite{Magnus39,RemSok}, also the conjugacy problem in free solvable groups is in uniform \TC.
	\end{itemize}
	Notice that images of group elements under the Magnus embedding can be computed in \TC (since any image under homomorphisms of finitely generated monoids can be computed in \TC \cite{LangeM98}).
	Thus, for free solvable groups as well as for iterated wreath products of abelian groups, our results nail down the complexity of conjugacy precisely. This is because the word problem in \Z~is already hard for \TC (and so the conjugacy problem in free solvable groups is \TC-complete). Also for wreath products $A \wr B$ with $A$ abelian or $B$ torsion-free, we have a tight complexity bound because in this case there is a reduction from the cyclic subgroup membership problem (resp.\ cyclic submonoid membership problem) in $B$ to the conjugacy problem in $A\wr B$.
	
	To solve the conjugacy problem, we first deal with the word problem. For a free solvable group of degree $d$, we obtain a circuit of majority depth $d$. It is not clear how a circuit of smaller majority depth could be constructed.
	On the other hand, \cite{MyasnikovRUV10} presents an algorithm for the word problem running in cubic time for arbitrary solvability degree. This gives rise to the question whether the depth  (or the size) of circuits for the word and conjugacy problem of free solvable groups could be bounded uniformly independent of the degree. Note that a negative answer to this question would imply that $\TC \neq \Nc1$.

	We want to emphasize that throughout we assume that the groups are finitely generated. As wreath products we consider only \emph{restricted} wreath products, that is the underlying functions are required to have finite support.
	
	\paragraph{Outline.}
	\prettyref{sec:prelims} introduces some notation and recalls some basic facts on complexity. Then in \prettyref{sec:wreath}, we define wreath products and discuss the solution to the word problem. \prettyref{sec:cpwreath} and \prettyref{sec:itwreath}, the main parts, examine the conjugacy problem in wreath products resp.\ iterated wreath products. In order to do so, we deal with the power problem in iterated wreath products in \prettyref{sec:itwreath}. Finally, in \prettyref{sec:conclusion}, we discuss some open problems.
	This work is an extended version of the conference paper \cite{MiasnikovVW17}. It contains all proofs, some more examples and a slightly stronger version of \prettyref{thm:cpwreath}.
	%Most proofs are omitted due to space constraints~-- they can be found in  the full version on arXiv \cite{MiasnikovVW16_TC_arxiv}.

	\section{Preliminaries}\label{sec:prelims}
	
	\paragraph{Words.} An \emph{alphabet} is a (finite or infinite) set $\Sig$; an element $a \in \Sig$ is called a \emph{letter}. 
	The free monoid over $\Sig$ is denoted by $\Sig^*$; its elements are finite sequences of letters and they
	are called {\em words}. The multiplication of the monoid is concatenation of words. The identity element is the empty word $\epsilon$.

	\paragraph{Groups.}
	We consider a finitely generated (\fg) group $G$ together with a surjective homomorphism $\eta:\Sig^* \to G$ (a \emph{monoid presentation}) for some finite alphabet $\Sig$. Throughout, all groups we consider are finitely generated even if not mentioned explicitly.
	In order to keep notation simple, we suppress the homomorphism $\eta$ and consider words also as group elements. We write $w=_{G}w'$ as a shorthand for $\eta(w)=\eta(w')$ and $ w \in_G A$ instead of $\eta(w) \in \eta(A)$ for $A \sse \Sig^*$ and $w \in \Sig^*$. 
	Whenever it is clear that we deal with group elements $g,h \in G$, we simply write $g=h$ for equality in $G$. We always assume that $\Sig =_G \Sig^{-1}$.
	
	We say two group elements $g,h\in G$ are conjugate, and we write $g\sim h$, if there exists an element $x\in G$ such that $g^x = x^{-1}gx = h$. Similarly, we say two words $u$ and $v$ in generators of $G$ are conjugate, and we write $u\sim_G v$, if the elements of $G$ represented by $u$ and $v$ are conjugate as elements of $G$. 
	We denote by $\ord(g)$ the order of a group element $g$ (\ie the smallest positive integer $d$ such that $g^d=1$, or $\infty$ if no such integer exists). For $g \in G$, the cyclic subgroup generated by $g$ is denoted by $\gen{g}$.
	A $d$-fold commutator is a group element of the form $x^{-1} y^{-1} xy$ for $(d-1)$-fold commutators $x$ and $y$; a $0$-fold commutator is any group element. The \emph{free solvable group} of \emph{degree} $d$ is the group subject only to the relations that all $d$-fold commutators are trivial. 
	
	\subsection{Complexity}
	Computation or decision problems are given by functions $f:\Del^* \to \Sig^*$ for some finite alphabets $\Del $ and $\Sig$. A decision problem (or formal language) $L$ is identified with its characteristic function $\chi_L: \Del^* \to \oneset{0,1}$ with $\chi_L(x)=1$ if, and only if, $x \in L$.

	\paragraph{Computational Problems in Group Theory.}
	
	Let $G$ be a group with finite generating set $\Sigma$. We define the following algorithmic problems in group theory.
	
	\begin{itemize}
		
		\item The \emph{word problem}  $ \WP(G) $  of $G$, is the set of all words representing the identity in $G$.
		
		\item The \emph{conjugacy problem} $\CP(G)$ is the set of all pairs $(v,w)$ such that $v \sim_G w$.
		
		\item The \emph{cyclic subgroup membership problem} $\CSMP(G)$: the set of all pairs $(v,w)$ such that $w \in \gen{v}$ (\ie there is some $k \in \Z$ with $v^k=_Gw$).
		
		\item The \emph{cyclic submonoid membership problem} $\CMMP(G)$: the set of all pairs $(v,w)$ such that  $w \in_G \oneset{v}^*$ (\ie there is some $k \in \N$ with $v^k=_Gw$).
		
		\item 
		The \emph{power problem} $\PP(G)$: on input of some $(v,w)\in \Sig^*\times \Sig^*$ decide whether there is some $k\in \Z$ such that $v^k=_G w$ and, in the ``yes'' case, compute this $k$ in binary representation. If $v$ has finite order in $G$, the computed $k$ has to be the smallest non-negative such $k$. 
	\end{itemize}
	Whereas the first four of these problems are decision problems, the last one is an actual computation problem.
	Be aware that sometimes in literature the power problem is defined as what we refer to as cyclic subgroup membership problem.

	\paragraph{Circuit Classes.}%\label{sec:circuitcomp}
	The class \Ac{0} is defined as the class of functions computed by families of circuits of constant depth and polynomial size with unbounded fan-in Boolean gates (and, or, not). \TC additionally allows majority gates. A majority gate (denoted by $\mathrm{Maj}$) returns $1$ if the number of $1$s in its input is greater or equal to the number of $0$s. In the following, we always assume that the alphabets $\Del$ and $\Sig$ are encoded over the binary alphabet $\oneset{0,1}$ such that each letter uses the same number of bits. Moreover, we assume that also the empty word $\epsilon$ has such a encoding over $\oneset{0,1}$, which is denoted by $\epsilon$ as well (be aware of the slight ambiguity). The empty word letter is used to pad outputs of circuits to fit the full number of output bits; still we do not forbid to use it in the middle. We say a function $f$ is \emph{\Ac0-computable} (resp.\ \emph{\TC-computable}) if $f \in \Ac0$ (resp.\ $f \in \TC$).

	In the following, we only consider $\mathsf{Dlogtime}$-uniform circuit families. $\mathsf{Dlogtime}$-uniform means that there is a deterministic Turing machine which decides in time $\Oh(\log n)$ on input of two gate numbers (given in binary) and the string $1^n$ whether there is a wire between the two gates in the $n$-input circuit and also decides of which type some gates is. Note that the binary encoding of the gate numbers requires only $\Oh(\log n)$ bits~-- thus, the Turing machine is allowed to use time linear in the length of the encodings of the gates.
	For more details on these definitions we refer to \cite{Vollmer99}.  In order to keep notation simple we write $\Ac0$ (resp.\ $\TC$) for $\mathsf{Dlogtime}$-uniform $\Ac{0}$ (resp.\ $\mathsf{Dlogtime}$-uniform $\Tc0$) throughout.
	We have the following inclusions (note that even $\TC \sse \P$ is not known to be strict):
	\begin{align*}
	\Ac0 \ssnq \TC \sse \L \sse \P.
	\end{align*}
	The following facts are well-known and will be used in the following without further reference:
	\begin{itemize}
		\item Barrington, Immerman, and Straubing \cite{BarringtonIS90} showed that $\TC = \FO(+,*,\mathrm{Maj})$, \ie  \TC comprises exactly those languages which are defined by some first order formula with majority quantifiers where positions may be compared using $+$, $*$ and $<$. In particular, if we can give a formula with majority quantifiers using only addition and multiplication predicates, we do not need to worry about uniformity.
		\item Homomorphisms can be computed in \TC \cite{LangeM98}: on input of two alphabets $\Sig$ and $\Del$ (coded over the binary alphabet), a list of pairs $(a,v_a)$ with $a \in \Sig$ and $v_a \in \Del^*$ such that each $a \in \Sig$ occurs in precisely one pair, and a word $w \in \Sig^*$, the image $\phi(w)$ under the homomorphism $\phi$ defined by $\phi(a) = v_a$ can be computed in \TC. Moreover, if $\phi$ is length-multiplying (that is $\phi(a)$ and $\phi(b)$ have the same length for all $a,b \in \Sig$),  the computation is in \Ac0. Note that by padding with the empty-word letter $\epsilon$, we can assume that all homomorphisms are length-multiplying.
		
		\item Iterated addition is the following problem: given $n$ numbers $a_1, \dots, a_n$ (in binary), compute $\sum_{i=1}^{n}a_i$ (as binary number).
		This is well-known to be in \TC.
	\end{itemize}
	
	\begin{example}\label{ex:abelian}
		Finitely generated abelian groups have word problem in \TC: the word problem of $\Z$ is in \TC using iterated addition (summing up numbers $1$ and $-1$), the word problem of finite cyclic groups is in \TC by iterated addition and then calculating modulo; and, finally, a word in a direct product is the identity if, and only if, it is the identity in all components. 
	\end{example}
	
	\begin{example}\label{ex:sortTC}
		Let $(k_1, v_1), \dots, (k_n,v_n)$ be a list of $n$ key-value pairs $(k_i, v_i)$ equipped with a total order on the keys $k_i$ such that it can be decided in \TC whether $k_i < k_j$. We assume that all pairs $(k_i, v_i)$ are encoded with the same number of bits. 
		It is a standard fact that the problem of sorting the list according to the keys is in \TC (\ie the desired output is a list $(k_{\pi(1)}, v_{\pi(1)}), \dots, (k_{\pi(n)},v_{\pi(m)})$ for some permutation $\pi$ such that $k_{\pi(i)} < k_{\pi(j)}$ for all $i < j$).
		
		We briefly describe a circuit family to do so: The first layer compares all pairs of keys $k_i,k_j$ in parallel. The next layer  for all $i$ and $j$ computes a predicate $P(i,j)$ which is true if, and only if, $\abs{\set{\ell}{k_\ell < k_i}} = j$. The latter is computed by iterated addition. %like in \prettyref{ex:countTC}. 
		As a final step the $j$-th output pair is set to $(k_i, v_i)$ if, and only if, $P(i,j)$ is true.
	\end{example}
	
	\paragraph{Reductions.} Let $K\sse \Del^*$ and $L \sse \Sig^*$ be languages and $\cC$ a complexity class. Then $K$ is called \emph{$\cC$-many-one-reducible} to $L$ if there is a $\cC$-computable function $f: \Del^* \to \Sig^*$ such that $w \in K$ if, and only if, $f(w) \in L$. In this case, we write $K \leq_{\mathrm{m}}^\cC L$.
	
	A function $f$ is \emph{\Ac0-(Turing)-reducible} to a function $g$ if there is a $\mathsf{Dlogtime}$-uniform family of \Ac0 circuits computing $f$ which, in addition to the Boolean gates, also may use oracle gates for $g$ (\ie gates which on input $x$ output $g(x)$). This is expressed by $f \in \Ac{0}(g)$ or $f \leq_T^{\Ac0} g$. 
	%
	%For a group $G$, we write $\Ac0(G)$ as shorthand of $\Ac0(\WP(G))$. 
	Likewise \TC (Turing) reducibility is defined. 
	Note that if $L_1, \dots, L_k$ are in \TC, then $\TC(L_1, \dots, L_k) = \TC$ (see \eg \cite{Vollmer99}).

	\begin{remark}
		The cyclic subgroup membership problem, in particular, allows to solve the word problem: some group element is in the cyclic subgroup generated by the identity if, and only if, it is the identity. Moreover, the cyclic subgroup membership problem for $(v,w)$ can be decided by two calls to the cyclic submonoid membership problem (for $(v,w)$ as well as for $(v^{-1},w)$).
		Also, the power problem is a stronger version of the cyclic submonoid membership problem (simply check the sign of the output of the power problem).
		Thus, we have
		\[ \WP(G) \leq_{\mathrm{m}}^{\Ac0}\CSMP(G) \leq_T^{\Ac0}\CMMP(G) \leq_T^{\Ac0}\PP(G).\]
	\end{remark}
	
	Moreover, the power problem enables to decide whether an element is of finite order (just compute the $k$ such that $g^k =_G g^{-1}$~-- if this is a positive number, then $g$ is of finite order, otherwise not).

	\begin{example}\label{ex:bs}
		Let $\BS12= \genr{a,t}{tat^{-1} = a^2}$ be the Baumslag-Solitar group. The conjugacy problem of $\BS12$ is in \TC by \cite{DiekertMW14}. Moreover, let us show that the power problem is also in \TC:  $\BS12$ is the semi-direct product $\Z[1/2] \rtimes \Z$ with multiplication defined by $(r,m)\cdot(s,q) = (r + 2^m s,m+q)$~-- see \eg\ \cite{DiekertMW14}. Any word of length $n$ over the generators can be transformed in \TC to a pair $(r,m)$ with $m\leq n$ and $r$ can be written down with $\Oh(n)$ bits in binary. Let $(r,m)$ and $(s,q)$ be two such inputs for the power problem. We wish to decide whether there is some $\ell$ with $(r,m)^\ell =(s,q)$: if $q \neq 0$, then the only possibility for $\ell$ is $\ell = q/m$.  If this is an not integer, then there is no such $\ell$. If it is, one needs to check whether it satisfies $(r,m)^\ell =(s,q)$. Because $\ell$ is bounded by the input length, this can be done in \TC using the circuit for the word problem \cite{Robinson93phd,DiekertMW14}. 
		
		Now let $q=0$. If also $s=0$, then the solution is $\ell = 0$. So let $s\neq 0$. If $m \neq 0$, clearly there is no solution, so we are in the case $q=m=0$ and $r,s \neq 0$. But now, again we simply need to compute $\ell = s/r$ (this can be done in \TC using Hesse's circuit for division \cite{hesse01,HeAlBa02}). If it is an integer, the power problem has the solution $\ell$, otherwise, it does not have a solution.

		Notice that this example shows that there are natural groups where the power problem can be solved in \TC, but~-- because of the exponential distortion of the subgroup $\gen{a}$~-- the solution to the power problem can only be returned if encoded in binary.
	\end{example}

	\section{Wreath Products and the Word Problem}\label{sec:wreath}
	Let $A$ and $B$ be groups. For a function $f: B \to A$ the \emph{support} of $f$ is defined as $\supp(f) = \set{b\in B}{f(b) \neq 1}$.
	For two groups $A$ and $B$, the set of functions from $B$ to $A$ with finite support is denoted by $A^{(B)}$; it forms a group under point-wise multiplication. 
	Mapping $a\in A$ to the function %$a(b)$ which has a value of $a$ if $b=1$ and $1$ otherwise, 
	\begin{align}
	a(b) = \begin{cases} a &\mbox{  if $b=1$,} \\ 1 &\mbox{  otherwise,}\end{cases}\label{eq:afunc}
	\end{align}
	gives an embedding of $A$ into $A^{(B)}$. 
	In what follows we identify $A$ with its image in $A^{(B)}$.
	The \emph{wreath product} $A\wr B$ of $A$ and $B$ is defined as the semi-direct product $B\ltimes A^{(B)}$, where the action of $b\in B$ on a function $f\in A^{(B)}$ is defined by $f^b(x) = f(xb^{-1})$. Note that this is also referred to as restricted wreath product. We identify $B$ and $A^{(B)}$ (and hence also $A$) with their canonical images in  $A\wr B$. Thus, for the multiplication in $A \wr B$ we have the following rules
	\begin{align*}
	\qquad (b,f)(c,g) &= (bc, f^cg),& (b,f)^{-1} &= (b^{-1}, (f^{-1})^{b^{-1}})\end{align*}
	for $b,c \in B$ and $f,g \in A^{(B)}$, where $f^{-1}$ is the point-wise inverse (\ie $f^{-1}(b) = (f(b))^{-1}$ for all $b \in B$).
	
	Let $\Sig_A$ and $\Sig_B$ be fixed  generating sets of $A$ and $B$, correspondingly. Then, $A \wr B$ is generated by $\Sig = \Sig_A \cup \Sig_B$ (using the embedding \prettyref{eq:afunc} of $A$ into $A \wr B$). 
	Given a word $w \in \Sig^*$ of length $n$, we can group it as $w = a_1b_1 \cdots a_mb_m $ with $a_i \in \Sig_A^*$, $b_i \in \Sig_B^*$ and $m\leq n$. Introducing factors $b\oi b \in \Sig_B^*$, we can rewrite this as follows:
	\begin{eqnarray*}
		w &=_G& a_1b_1 \cdots a_mb_m =_G b_1\,b_1^{-1}a_1b_1\cdots a_mb_m =_G b_1\:a_1^{b_1}a_2b_2\cdots a_mb_m\\
		&=_G&  b_1b_2\,(a_1^{b_1}a_2)^{b_2} \cdots a_mb_m =_G b_1b_2\: a_1^{b_1b_2}a_2^{b_2} \cdots a_mb_m\\
		&=_G& b_1\cdots b_m \cdot a_1^{b_1\cdots b_m}\cdots a_m^{b_m}
	\end{eqnarray*}
	Thus, we have $w =_G (b,f)$ with $b=b_1\cdots b_m$ and $f=a_1^{b_1\cdots b_m}\cdots a_m^{b_m}$. 
	Since $a^c$ and $a'^{c'}$ commute for distinct $c,c' \in B$ and for any $a,a'\in A$, we can reorder this product 
	to ensure that the exponents are distinct:  whenever we have $b_i\cdots b_m =_B b_j\cdots b_m$ for $i<j$, we combine the corresponding terms into a single term $(a_ia_j)^{b_i\cdots b_m}$. Thus, we can rewrite $f$ as the product $\tilde a_1^{\tilde b_1}\ldots \tilde a_k^{\tilde b_k}$, where $\tilde a_1, \ldots, \tilde a_k \in \Sig_A^*$, and $\tilde b_1, \ldots, \tilde b_{k}\in \Sig_B^*$ all represent distinct elements of $B$. Moreover, we can assume that all $\tilde a_i$ represent non-trivial elements of $A$. With this notation, we have $f(\tilde b_i) = \tilde a_i \neq 1$ and $f(c)=1$ for $c \not\in \{\tilde{b}_1, \dots, \tilde b_k\} = \supp(f)$. Furthermore, $f$ is completely given by the set of pairs $ \{(\tilde b_1, \tilde a_1), \ldots, (\tilde b_k, \tilde a_k)\}$.

	In the following, we always assume that a function $f \in A^{(B)}$ is represented as a list of pairs $f=((\tilde b_1, \tilde a_1), \ldots, (\tilde b_k, \tilde a_k))$ with  $\{\tilde{b}_1, \dots, \tilde b_k\} = \supp(f)$. The order of the pairs does not matter -- but they are written down in some order. We also assume for an input $w$ of length $n$, that $k=m=n$ and that every word $\tilde b_i$, $\tilde a_i$ has length $n$. This is achieved by padding with pairs $(\eps,\eps)$ (where $\epsilon$ is the letter representing the empty word).
	
	\begin{lemma}\label{lem:reduce}
		Let $A$ and $B$ be finitely generated groups and let $G= A \wr B$. There is an $\Ac0(\WP(A),\WP(B))$ circuit family which on input $w \in \Sig^*$ computes $(b,f)$ with $w =_G (b,f)$ where $b \in \Sig_B^*$ and $f$ is encoded as described in the preceding paragraph.
	\end{lemma}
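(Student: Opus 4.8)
The plan is to follow the computation sketched just before the lemma statement and argue that each step is computable in $\Ac0$ relative to the word problems of $A$ and $B$. First I would parse the input word $w \in \Sig^*$ of length $n$ into the blocks $w = a_1 b_1 \cdots a_m b_m$ with $a_i \in \Sig_A^*$, $b_i \in \Sig_B^*$; since $\Sig_A$ and $\Sig_B$ are fixed disjoint alphabets, this grouping is a purely local operation on the input and is in $\Ac0$. One may pad so that $m = n$ and each $a_i$, $b_i$ has length $n$ (inserting empty-word letters $\eps$), which keeps everything length-multiplying and hence in $\Ac0$. Then I would set $b = b_1 \cdots b_m$ (just concatenation, trivially in $\Ac0$) and, for each $i$, form the prefix suffix-product word $c_i = b_i b_{i+1} \cdots b_m \in \Sig_B^*$ together with the pair $(c_i, a_i)$. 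Each $c_i$ is a subword of the (padded) input, so computing all the pairs $(c_1, a_1), \dots, (c_m, a_m)$ in parallel is again in $\Ac0$.

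Next I would handle the two "cleanup" steps. The first is merging: two pairs $(c_i, a_i)$ and $(c_j, a_j)$ with $c_i =_B c_j$ must be combined, since their contributions to $f$ land on the same point of $B$. Here I would invoke the oracle for $\WP(B)$: for every pair $i < j$ compute in parallel whether $c_i c_j^{-1} =_B 1$. This gives an equivalence relation on $\{1, \dots, m\}$; using iterated addition (as in Example~\ref{ex:sortTC}) one can, in $\Ac0$ with the $\WP(B)$ oracle, assign to each class a canonical representative index and, for the representative $i$, form the concatenation of all $a_j$ over $j$ in the class of $i$ (the order is irrelevant since those $a_j$ all sit at the same base point, so their images in $A$ commute pairwise). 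The second cleanup is deleting pairs whose $A$-part is trivial: for each surviving representative $i$, query the $\WP(A)$ oracle on the concatenated word to test whether it represents $1$ in $A$, and if so replace the pair by $(\eps, \eps)$. Finally I would compact/pad the list of pairs to exactly $n$ entries of length $n$ each — a sorting-style operation, in $\Ac0$ by the usual prefix-counting trick — and output $(b, f)$ with $f$ this list.

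The delicate points are bookkeeping rather than deep. One must be careful that when two indices $i < j$ are merged, the relative order $a_i$-before-$a_j$ inside the merged word does not matter: this is exactly the observation from the text that $a^c$ and $a'^{c}$ for the same $c$ commute in $A \wr B$ (they are just the product in the single copy of $A$ sitting over $c$), so any order-respecting concatenation produces a word representing the correct element of $A$ at base point $c$. A second subtlety is that the same support point may be reached by several of the $c_i$, and we want exactly one pair per point; choosing the class representative to be the minimal index in its $=_B$-class makes the choice canonical and uniform. A third is that all intermediate words have length $O(n)$, so the padding conventions ($k = m = n$, each entry of length $n$) are respected throughout and every homomorphic/selection step stays length-multiplying, keeping us inside $\Ac0$ rather than $\TC$.

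The main obstacle I expect is purely one of uniformity and encoding discipline: making sure the equivalence-class computation, the choice of representatives, the in-parallel concatenation of the $a_j$ within a class, and the final compaction into a fixed-width list of pairs are each genuinely expressible by a $\mathsf{Dlogtime}$-uniform $\Ac0$ circuit with oracle gates — this is where one uses that iterated addition and the sorting/prefix-counting primitives of Example~\ref{ex:sortTC} are available, and where one must resist the temptation to do anything that secretly needs majority on non-$\{0,1\}$-bounded data. Once those primitives are in hand, the correctness of the output $(b,f)$ is exactly the chain of equalities $w =_G b_1 \cdots b_m \cdot a_1^{b_1 \cdots b_m} \cdots a_m^{b_m}$ displayed before the lemma, so no further group-theoretic work is needed.
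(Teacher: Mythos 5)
Your overall strategy is the same as the paper's: compute the $B$-suffix products, use $\WP(B)$ oracle calls to decide which positions land on the same support point, take the minimal index of each class as its representative, multiply the $A$-letters within a class, and kill trivial entries with a final layer of $\WP(A)$ calls. However, two points in your write-up do not hold up. First, your justification for the merging step is wrong: you claim that $a^c$ and $a'^{c}$ \emph{for the same} $c$ commute in $A\wr B$. They do not in general --- they multiply inside a single copy of $A$, and $A$ is an arbitrary finitely generated group here; the commutation the paper exploits is only for \emph{distinct} base points $c\neq c'$. So the order of the $a_j$ within a class is forced to be the original index order. Your circuit happens to produce an order-respecting concatenation anyway, so the construction survives, but the stated reason is false and the increasing-index order must be made explicit.

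Second, and more seriously for the complexity claim: you invoke iterated addition, prefix counting, and the sorting primitive of \prettyref{ex:sortTC} to pick class representatives and to compact the final list of pairs. These are $\TC$ primitives, not $\Ac0$ ones --- counting is precisely what $\Ac0$ cannot do --- so as written your reduction only yields $\TC(\WP(A),\WP(B))$, which is strictly weaker than the claimed $\Ac0(\WP(A),\WP(B))$ and would break the majority-depth accounting used later for \prettyref{thm:wpwreath}. The fix is what the paper does: never compact. The representative of a class is selected by the pure Boolean condition $\bigwedge_{j<i}\lnot(i\approx j)$; the product over a class is obtained by replacing, \emph{in place}, every letter not in the class (or lying in $\Sig_B$) by the empty-word letter $\eps$, a length-preserving substitution; and non-representative or trivial positions are overwritten by $(\eps,\eps)$ rather than removed. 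The encoding conventions fixed before the lemma explicitly allow $\eps$ inside words and padding pairs anywhere in the list precisely so that no counting is ever required.
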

	\begin{proof}%[Proof of \prettyref{lem:reduce}]
		For an input word $w = w_1 \cdots w_n \in \Sig^*$, we first calculate the image under the projection $\pi_B: a \mapsto \epsilon$ for $a \in \Sig_A$. Since $\epsilon$ is a letter in our alphabet, this is a length-preserving homomorphism, and thus, can be computed in \Ac0 \cite{LangeM98}. We have $b = \pi_B(w)$. 
		Next, define the following equivalence relation $\approx$ on $\oneset{1, \dots, n}$:
		\begin{align*}
		i \approx j \iff  \pi_B(w_{i+1} \cdots w_{n})=_B \pi_B(w_{j+1} \cdots w_{n})
		\end{align*}
		After the computation of $\pi_B$ it can be checked for all pairs $i,j$ in parallel whether $i\approx j$ using $\binom{n}{2}$ oracle calls to the word problem of $B$. Let $[i]$ denote the equivalence class of $i$.
		Now, $\oi b w$ is in the (finite) direct product $\prod_{[i]} A^{\pi_B(w_{i+1} \cdots w_n)}  \leq A^{(B)}$ (this is well-defined by the definition of $\approx$). The projection to the component associated to $[i]$ is computed by replacing all $w_j$ by $\epsilon$ whenever $w_j \in \Sig_B$ or $j\not \approx i$. As  before, this computation is in \Ac0. As a representative of $[i]$, we choose the smallest $i \in [i]$. Now, the preliminary output is the pair $(b,(f_1, \dots, f_n))$ with
		$$f_i=\begin{cases}
		\left(\pi_B(w_{i+1} \cdots w_n), \prod_{j \in [i]} w_j\right) & \text{if } i = \min[i],\\
		(\eps,\eps) & \text{otherwise.}
		\end{cases}$$
		Up to the calculation of $\approx$, everything can be done in $\Ac0$ (checking $ i = \min[i]$ amounts to $\bigwedge_{j<i} \lnot(i \approx j)$). Finally, pairs $f_i= (b_i,a_i)$ with $a_i=_A 1$ are replaced by $(\eps,\eps)$. This requires an additional layer of calls to the word problem of $A$.
		
		If we assign appropriate gate numbers corresponding to the description of our circuit (\eg concatenation of the number of the layer and the indices $i$, $j$), it is easy to see that it can be checked in linear time on input of two binary gate numbers if the two gates are connected. This establishes uniformity of the circuit.
		\end{proof}

	\begin{theorem}\label{thm:wpwreath}
		$\WP(A \wr B) \in \Ac0(\WP(A),\WP(B))$.
	\end{theorem}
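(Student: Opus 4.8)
The plan is to read off the theorem directly from \prettyref{lem:reduce}. An element $(b,f)\in A\wr B$ equals the identity $(1,\mathbf 1)$ if, and only if, $b=_B 1$ and $f$ is the constant function $\mathbf 1$. So on input $w\in\Sig^*$ I would first run the circuit of \prettyref{lem:reduce} to obtain a pair $(b,f)$ with $w=_G(b,f)$, where $b\in\Sig_B^*$ and $f=((\tilde b_1,\tilde a_1),\dots,(\tilde b_n,\tilde a_n))$ is the padded list representation described above, and then test these two conditions.

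Testing $b=_B 1$ is a single oracle call to $\WP(B)$. For the second condition, recall that \prettyref{lem:reduce} delivers $f$ in normalized form: every pair which is not the padding pair $(\eps,\eps)$ has an $A$-component $\tilde a_i$ representing a \emph{non-trivial} element of $A$, and the $\tilde b_i$ of the genuine pairs represent pairwise distinct elements of $B$. Hence $f$ represents $\mathbf 1$ exactly when no genuine pair survives, that is, when every entry of the list equals $(\eps,\eps)$. Since all encodings have the same fixed length, this is the purely Boolean test $\bigwedge_{i=1}^n \bigl(f_i = (\eps,\eps)\bigr)$, computable in \Ac0 with no oracle gates at all. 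The circuit accepts iff $b=_B 1$ and this conjunction holds.

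It then remains to observe that the whole construction is an $\Ac0(\WP(A),\WP(B))$ circuit family. This is because \Ac0 oracle (Turing) reductions compose: we stack the constant-depth layers performing the two tests on top of the circuit from \prettyref{lem:reduce}, adding one further $\WP(B)$-oracle gate; the $\WP(A)$-oracle is used only inside the subcircuit of \prettyref{lem:reduce}. Uniformity carries over exactly as in the proof of \prettyref{lem:reduce}, by assigning gate numbers compatible with the layered description.

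There is essentially no obstacle here beyond \prettyref{lem:reduce} itself; the only point worth emphasizing is that the normalization performed in that lemma turns ``$f$ is trivial'' into a syntactic check, so that the final test needs no additional queries to $\WP(A)$ and the reduction does not grow beyond constant depth.
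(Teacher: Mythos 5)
Your proposal is correct and follows exactly the route of the paper's own proof: reduce via \prettyref{lem:reduce}, test $b=_B 1$ with one $\WP(B)$ oracle gate, and observe that the normalization in \prettyref{lem:reduce} makes triviality of $f$ a purely syntactic check that the list is all $(\eps,\eps)$. The paper states this in one line; your added remarks on composition of \Ac0 reductions and uniformity are just the details it leaves implicit.
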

	\begin{proof}
		This is an immediate consequence of \prettyref{lem:reduce} since $ (b,f)=_G 1$ if, and only if, $b=_B1$ (can be checked using the word problem of $B$) and $f = ((\eps,\eps), \dots, (\eps,\eps))$.
		\end{proof}
	Note that \prettyref{thm:wpwreath} is a stronger version of \cite{Waack90} where \Nc1 reducibility is shown.
	
	\begin{definition} Let $d \in \N$. We define the \emph{left-iterated wreath product}, $A\mathop{^d\wr} B$, and the \emph{right-iterated wreath product} $A\wr^d B$ of two groups $A$ and $B$ inductively as follows:
		\begin{multicols}{2}
			\begin{itemize}
				\item $A\mathop{^1\wr} B = A\wr B$
				\item $A\mathop{^d\wr} B = A\wr (A\;{^{d-1}\wr}\; B)$
			\end{itemize}
			\begin{itemize}
				\item $A\wr^1 B = A\wr B$
				\item $A\wr^d B = (A\wr^{d-1} B)\wr B$
			\end{itemize}
		\end{multicols}	
	\end{definition}

	Let $S_{d,r}$ denote the free solvable group of degree $d$ and rank $r$. The Magnus embedding \cite{Magnus39} is an embedding $S_{d,r} \to \Z^r \wr S_{d-1, r}$. By iterating the construction, we obtain an embedding $S_{d,r} \to \Z^r\mathop{^{d}\wr} 1$.
	For the purpose of this paper, the explicit definition of the homomorphism is not relevant~-- it suffices to know that it is an embedding and that it preserves conjugacy \cite{RemSok}.
	The following corollary is also a consequence of \cite{KrebsLR07} since a wreath product can be embedded into the corresponding block product.

	\begin{corollary}
		Let $A$ and $B$ be f.\,g.\ abelian groups and let $d \geq 1$. The word problems of $A\wr^d B$ and of $A\mathop{^d\wr} B$ are in \TC. In particular, the word problem of a non-trivial free solvable group is \TC-complete.
	\end{corollary}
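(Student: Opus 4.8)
The plan is a routine induction on $d$ built on \prettyref{thm:wpwreath}, \prettyref{ex:abelian}, and the closure fact $\TC(L_1,\dots,L_k)=\TC$ for $L_1,\dots,L_k\in\TC$ recalled in \prettyref{sec:prelims}. For the base case $d=1$ note that $\WP(A),\WP(B)\in\TC$ by \prettyref{ex:abelian} (as $A$ and $B$ are \fg abelian), hence $\WP(A\wr B)\in\Ac0(\WP(A),\WP(B))\sse\TC$ by \prettyref{thm:wpwreath}. For the inductive step I would handle the two shapes symmetrically: writing $A\mathop{^d\wr}B=A\wr(A\mathop{^{d-1}\wr}B)$ and invoking the induction hypothesis $\WP(A\mathop{^{d-1}\wr}B)\in\TC$ together with $\WP(A)\in\TC$, \prettyref{thm:wpwreath} yields $\WP(A\mathop{^d\wr}B)\in\Ac0(\WP(A),\WP(A\mathop{^{d-1}\wr}B))\sse\TC$; likewise, from $A\wr^d B=(A\wr^{d-1}B)\wr B$, the induction hypothesis $\WP(A\wr^{d-1}B)\in\TC$ and $\WP(B)\in\TC$ give $\WP(A\wr^d B)\in\Ac0(\WP(A\wr^{d-1}B),\WP(B))\sse\TC$.

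For the free solvable group: applying the statement just proved with $A=\Z^r$ and $B=1$ (the trivial group, which is \fg abelian) gives $\WP(\Z^r\mathop{^d\wr}1)\in\TC$. Since the Magnus embedding $S_{d,r}\hookrightarrow\Z^r\mathop{^d\wr}1$ is induced by a homomorphism of \fg monoids, images under it are computable in \TC \cite{LangeM98}; thus a word $w$ over the generators of $S_{d,r}$ can be mapped in \TC to its image, which is trivial in $\Z^r\mathop{^d\wr}1$ if, and only if, $w=_{S_{d,r}}1$. Hence $\WP(S_{d,r})\in\TC$ for every $d\geq 1$ and $r\geq 1$. (The circuit obtained this way has majority depth growing with $d$; whether this is necessary is one of the questions raised in the introduction.)

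It remains to see that a non-trivial free solvable group is also \TC-hard. Here I would use that a single free generator $x_1$ of $S_{d,r}$ has infinite order (it maps onto a basis element of the abelianization $\Z^r$), so $\gen{x_1}\cong\Z$; the fixed, length-preserving homomorphism sending the generator of $\Z$ to $x_1$ and its inverse to $x_1^{-1}$ is an $\Ac0$ many-one reduction from $\WP(\Z)$ to $\WP(S_{d,r})$, and $\WP(\Z)$ is \TC-hard by the remark in the introduction. Together with the membership result, $\WP(S_{d,r})$ is \TC-complete.

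The only point that needs a moment's care — and which is already discharged by the material in \prettyref{sec:prelims} — is that the $d$-fold nesting of $\Ac0$-reductions does not escape \TC: each wreath layer adds a bounded number of circuit layers plus a layer of oracle gates, and replacing a \TC-oracle gate by a \TC sub-circuit preserves constant depth, polynomial size, and $\mathsf{Dlogtime}$-uniformity. Since $d$ is a fixed constant, finitely many such substitutions keep the family in \TC; there is no genuine obstacle beyond bookkeeping.
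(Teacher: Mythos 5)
Your proposal is correct and follows essentially the same route as the paper's proof: induction on $d$ using \prettyref{thm:wpwreath} together with \prettyref{ex:abelian} and the closure of \TC under \Ac0 reductions, then the Magnus embedding (computable in \TC via \cite{LangeM98}) for membership, and the infinite-order generator giving a copy of $\Z$ for \TC-hardness. You have merely made explicit the induction and the uniformity bookkeeping that the paper leaves implicit.
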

	Note that here the groups $A$, $B$ and the number $d$ of wreath  products are fixed. Indeed, if there were a single \TC circuit which worked for free solvable groups of arbitrary degree, this circuit would also solve the word problem of the free group, which is \Nc1-hard~-- thus, showing \TC= \Nc1.
	
	\begin{proof}
		The first statement follows from \prettyref{thm:wpwreath} because f.\,g.\ abelian groups have word problem in \TC (see \prettyref{ex:abelian}). The second statement then follows by the Magnus embedding \cite{Magnus39} and the fact that homomorphisms can be computed in \TC. The hardness-part is simply due to the fact that a non-trivial free solvable group has an element of infinite order, \ie a subgroup $\Z$, whose word problem is hard for \TC.
		\end{proof}

	\begin{remark}
		For a \TC circuit, the \emph{majority depth} is defined as the maximal number of majority gates on any path from an input to an output gate (see \eg \cite{MacielT98}). 
		Assume that $\WP(A), \WP(B) \in \TC$. The circuit in the proof of \prettyref{lem:reduce} contains one layer of oracle gates to the word problem of $B$ followed by a layer of oracle gates to the word problem of $A$. The additional check for $b=_B1$ in the proof of \prettyref{thm:wpwreath} can be done in parallel to the computation of \prettyref{lem:reduce}; thus, it can be viewed as part of the layer of oracle gates for $\WP(B)$. Since \prettyref{lem:reduce} is an \Ac0 reduction, the majority depth of the resulting circuit is at most $m_A + m_B$ where $m_A$ (resp.\ $m_B$) is the majority depth of the circuit family for $\WP(A)$ (resp. $\WP(B)$).%the word problem of $A$ (resp.\ $B$).
		
		Starting with the word problem of a free abelian group $\Z^r$, which is in \TC with majority depth one, we see inductively that a $d$-fold iterated wreath product $\Z^r\mathop{^{d}\wr} 1$~-- and thus the free solvable group of degree $d$~-- has word problem in \TC with majority depth at most $d$. On the other hand, we do not see a method how to improve this bound any further.
		In \cite{KrebsLR07} a similar observation was stated for iterated block products (into which wreath products can be embedded). There the question was raised how the depth of the circuit for the word problem (or more general any problem recognized by the block product) is related to the number of block products in an iterated block product (the so-called block-depth).
	\end{remark}
	
	\begin{question}\label{qu:majoritydepth}
		Can the word problem of a free solvable group of degree $d$ be decided in \TC with majority depth less than $d$?
	\end{question}
	We want to point out that \prettyref{qu:majoritydepth} is related to an important question in complexity theory: as outlined in \cite{MacielT98}, a negative answer would imply that $\TC \neq \NC^1$.
	Nevertheless, the following observations point rather towards a positive answer of \prettyref{qu:majoritydepth}:
	the word problem of free solvable groups is decidable in time $\Oh(n^3)$~-- regardless of the solvability degree $d$ \cite{MyasnikovRUV10,Vassileva11}.
	Moreover, the circuit for linear solvable groups (\emph{not} for free solvable groups with $d>2$) from \cite{KoenigL17} can be arranged with majority depth bounded uniformly for all groups. This is because every matrix entry in a product of upper triangular matrices can be obtained as iterated addition of iterated multiplications of the entries of the original matrices (for the precise formula, see \cite{KoenigL17}). These operations have circuits of uniformly bounded depth (also for f.\,g.\ field extensions). Hence, only the size of the circuits, but not the depth, depends on the solvability degree.

	\section{The Conjugacy Problem in Wreath Products}\label{sec:cpwreath}
	In order to give a \TC reduction of the conjugacy problem of $A\wr B$ to the conjugacy problems of $A$ and $B$ and the power problem of $B$, we follow Matthews' outline \cite{Matthews66}, where the same reduction was done for decidability.
	For deciding conjugacy of two elements $(b,f), (c,g)$ in a wreath product $A\wr B$ we will study the behavior of $f$ and $g$ on cosets of $\langle b \rangle \leq B$. 
	For $ b,d, t\in B$, $f\in A^{(B)}$, and $t \in T$, we define $$\pi_{t,b}^{(d)} (f) = \left\{ \begin{array}{cl} \prod\limits_{j=0}^{N-1} f(tb^jd^{-1}) & \;\text{ if } \ord(b)=N<\infty, \\ \prod\limits_{j=-\infty}^{\infty} f(tb^jd^{-1}) & \;\text{ if } \ord(b)=\infty, \end{array} \right.$$
	which is an element of $A$.
	We denote $\pi_{t,b}^{(1)}(f)$ by $\pi_{t,b}(f)$. 
	The definition of the $\pi_{t,b}$ depends on the order of $b$. However, observe that even in the case when the order of $b$ is infinite the product is finite since the function $f$ is of finite support. In fact, it is the product of all possible non-trivial factors of the form $f(tb^jd^{-1})$ multiplied in increasing order of $j$. The same is true in the case when the order of $b$ is finite. So in order to compute $\pi_{t,b}^{(d)}$, we need to find all the elements of the form $tb^jd^{-1}$ for which $f$ is non-trivial, arrange them in increasing order of $j$ and concatenate the respective $a_j$.

	\begin{lemma}\label{lem:piprods}
		The computation of $\pi_{t,b}^{(d)} (f)$ is in $\TC(\PP(B))$. More precisely, the input is $b,d,t \in \Sig_B^*$ and a function $f=(( b_1,  a_1), \ldots, ( b_n,  a_n))$, the output is $\pi_{t,b}^{(d)} (f)$ given as a word over $\Sig_A$.
		Moreover,
		\begin{itemize}
			\item  if $B$ is torsion-free, then it is in $\TC(\CMMP(B))$,
			\item if $A$ is abelian, then it is in $\TC(\CSMP(B))$.
		\end{itemize}
	\end{lemma}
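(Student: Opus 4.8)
The plan is to turn the recipe stated just above the lemma into a uniform circuit: for each support point of $f$ lying on the coset $t\gen{b}d^{-1}$, read off the exponent $j$ with $tb^jd^{-1}$ equal to that point, sort these by $j$, and concatenate the corresponding $a$-values. Concretely, for a word $u=u_1\cdots u_m\in\Sig_B^*$ write $\ov u=u_m^{-1}\cdots u_1^{-1}$ (an \Ac0 operation, using $\Sig_B=\Sig_B^{-1}$), so $\ov u=_Bu^{-1}$. Given $(b,d,t)$ and $f=((b_1,a_1),\dots,(b_n,a_n))$ in the standard encoding of \prettyref{sec:wreath} (the $b_i$ of the genuine, i.e.\ non-$(\eps,\eps)$, pairs representing pairwise distinct elements of $B$), I would first form in parallel the words $c_i:=\ov t\,b_i\,d$ and note that $b_i=_Btb^jd^{-1}$ iff $b^j=_Bc_i$. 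A single layer of $\PP(B)$-oracle gates, queried on $(b,c_i)$ for all $i$, then returns for each $i$ either ``no solution'' or an integer $j_i$, which by the specification of $\PP$ is the unique integer with $b^{j_i}=_Bc_i$ when $\ord(b)=\infty$ and lies in $\{0,\dots,\ord(b)-1\}$ when $\ord(b)<\infty$. Next I would drop the ``no'' pairs (set their $a$-value to $\eps$), sort the rest by the key $(j_i,i)$ lexicographically using \prettyref{ex:sortTC} (tuples of binary integers, comparable in \Ac0), and output the concatenation of the sorted $a$-values. Two distinct genuine support elements can never receive the same $j_i$ (that would force $b_i=_Bb_{i'}$), so the genuine contributions come out in strictly increasing order of $j$, while the $(\eps,\eps)$-pairs contribute the empty word wherever they land; hence the output is $\prod_jf(tb^jd^{-1})$ in increasing order of $j$, that is $\pi_{t,b}^{(d)}(f)$. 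Everything apart from the \TC-sort and the single oracle layer is in \Ac0, and all components are $\mathsf{Dlogtime}$-uniform under the obvious gate numbering, so the procedure lies in $\TC(\PP(B))$.

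For the refinement when $B$ is torsion-free the obstacle is that $\CMMP(B)$ does not return the exponent, and $j_i$ can be very large — exponential in $n$ already for the group \BS12 of \prettyref{ex:bs}, as $\gen{b}$ may be exponentially distorted — so it cannot be found by a constant-depth binary search. I would exploit that \prettyref{ex:sortTC} needs nothing but pairwise comparisons of the keys, and these can all be done in one parallel layer of oracle gates. First, $c_i\in_B\gen{b}$ iff $\CMMP(b,c_i)$ or $\CMMP(b^{-1},c_i)$, which determines the index set $I=\{i:c_i\in_B\gen{b}\}$. Next, since $B$ is torsion-free, for $i,i'\in I$ the exponents $j_i,j_{i'}$ are the unique integers with $b^{j_i}=_Bc_i$ and $b^{j_{i'}}=_Bc_{i'}$ (if $b=_B1$ they are not unique, but then $c_i=_Bc_{i'}=_B1$ and the comparison below returns ``false'', which is still consistent), and from $b^{j_i-j_{i'}}=_Bc_ic_{i'}^{-1}$ one obtains
\[
j_i<j_{i'}\quad\Longleftrightarrow\quad (\exists m\in\N:\ (b^{-1})^m=_Bc_ic_{i'}^{-1})\ \text{ and }\ c_i\neq_Bc_{i'},
\]
a constant number of $\CMMP(B)$- and $\WP(B)$-queries (and $\WP(B)\in\Ac0(\CMMP(B))$). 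Breaking the remaining ties by the index $i$ makes the order total; feeding these comparison bits into the sorting circuit of \prettyref{ex:sortTC} and concatenating the $a$-values as before produces $\pi_{t,b}^{(d)}(f)$ in $\TC(\CMMP(B))$.

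When $A$ is abelian the product $\prod_jf(tb^jd^{-1})$ does not depend on the order of its factors, so the sorting step disappears entirely. Over one full period $j\in\{0,\dots,\ord(b)-1\}$ (resp.\ over all $j\in\Z$ when $\ord(b)=\infty$) the points $tb^jd^{-1}$ are pairwise distinct and exhaust the coset $t\gen{b}d^{-1}$, so
\[
\pi_{t,b}^{(d)}(f)=\prod_{i\in I}a_i,\qquad I=\{\,i\ :\ c_i\in_B\gen{b}\,\},
\]
the product over all support pairs whose location exists, in any fixed order (say increasing $i$). Membership $c_i\in_B\gen{b}$ is decided by one $\CSMP(B)$-query on $(b,c_i)$. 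So I would build the $c_i$ (in \Ac0), query $\CSMP(B)$ on $(b,c_i)$ in parallel, replace $a_i$ by $\eps$ whenever the answer is ``no'', and output $a_1\cdots a_n$; this is in $\TC(\CSMP(B))$ (and one can delete the interspersed $\eps$-letters in \TC if a literal word over $\Sig_A$ is wanted).

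The only genuinely delicate point is the torsion-free case: one must resist extracting the (possibly exponential) exponents and instead notice that the sort of \prettyref{ex:sortTC} is driven purely by parallel pairwise key comparisons, each reducible to a constant number of $\CMMP(B)$-queries. Everything else is bookkeeping layered on the standard \TC primitives — homomorphism evaluation \cite{LangeM98}, iterated addition, and \TC-sorting — together with the two elementary observations that $(\eps,\eps)$-pairs never affect the product and that distinct genuine support points occupy distinct positions $j$.
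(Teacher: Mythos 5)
Your proposal is correct and follows essentially the same route as the paper's proof: one parallel layer of $\PP(B)$-queries to extract the exponents, the \TC{} sort of \prettyref{ex:sortTC}, and concatenation; in the torsion-free case the exponents are replaced by pairwise order comparisons reduced to $\CMMP(B)$, and in the abelian case the sort is dropped and $\CSMP(B)$ suffices. Your additional care with the padding pairs $(\eps,\eps)$ and the $b=_B1$ edge case is sound but not a substantive departure.
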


	\begin{proof}%[Proof of \prettyref{lem:piprods}]
		One needs to check for all $j$ whether $t^{-1}b_jd \in \gen{b}$ (for $(b_j,a_j)\in f$) and if so, the respective power $k_j$ such that $t^{-1}b_jd = b^{k_j}$ has to be computed. For all $j$ this can be done in parallel using oracle gates for the power problem of $B$. 
		The next step is to sort the tuples $(b_j, a_j)$ with $t^{-1}b_jd \in \gen{b}$ according to their power $k_j$. This can be done in \TC as described in \prettyref{ex:sortTC}. The output $\pi_{t,b}^{(d)} (f)$ is the product (in the correct order) of the respective $a_j$.
		
		Now, let $B$ be torsion-free. The exponents $k_j$ with $t^{-1}b_jd = b^{k_j}$ are only needed in order to sort the pairs $(b_j, a_j)$. Thus, it suffices to decide for given $j$ and $j'$ whether $k_j \leq k_{j'}$ (where $k_{j'}$ is defined analogously to $k_j$). Since we assumed that $b$ has infinite order, we have  $k_j \leq k_{j'}$ if, and only if, $(t^{-1}b_jd)^{-1}t^{-1}b_{j'}d =_G b^{-k_j}b^{k_{j'}} \in_G \set{b^k}{k\in \N}$
		that is if, and only if, $(t^{-1}b_jd)^{-1}t^{-1}b_{j'}d$ is in the cyclic submonoid generated by $b$. Therefore, we can replace the power problem by the cyclic submonoid membership problem in the torsion-free case.
		
		Finally, let $A$ be abelian. In this case, the order of the factors of $\pi_{t,b}^{(d)} (f)$ does not matter; hence, there is no need for sorting the factors. For checking $t^{-1}b_jd \in \gen{b}$, the cyclic subgroup membership problem suffices.
		\end{proof}

	A \emph{full system of $\langle b\rangle$-coset representatives} is a set $ T\sse B$ of such that $ t\gen{b} \cap t'\gen{b}  = \emptyset$ for $t \neq t' \in T$ and $B = T\gen{b} $.
	In \cite{Matthews66}, Matthews provides the following criterion for testing whether two elements of a wreath product are conjugate.

	\begin{proposition}[{{\cite[{Prop.\ 3.5 and 3.6}]{Matthews66}}}]
		\label{prop:matthews}\label{prop:conjugacy criterion}
		Let $A$ and $B$ be groups. Two elements $x=(b,f)$ and $y=(c,g)$ in $A\wr B$ are conjugate if, and only if, there exists $d\in B$ such that
		\begin{itemize}
			\item $db=cd$ in $B$ and
			\item if $\ord(b)$ is finite, $\pi_{t,b}(f)$ is conjugate to $\pi_{t,b}^{(d)}(g)$ for all $t \in T$,
			\item if $\ord(b)$ is infinite, $\pi_{t,b}(f)$ is equal to $\pi_{t,b}^{(d)}(g)$ for all $t \in T$,
		\end{itemize}
		where $T$ is a full system of $\gen{b}$-coset representatives.
	\end{proposition}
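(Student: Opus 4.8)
The plan is to unwind conjugation in the semidirect product $A\wr B=B\ltimes A^{(B)}$ and then reduce the whole statement to a pointwise system of equations that can be solved separately on each left coset of $\gen{b}$. Recalling that the conjugates of $(b,f)$ are exactly the elements $s(b,f)s^{-1}$ with $s=(d,h)\in A\wr B$, the multiplication rules recalled above give
\[
(d,h)(b,f)(d,h)^{-1}=\bigl(dbd^{-1},\ h^{bd^{-1}}\,f^{d^{-1}}\,(h^{-1})^{d^{-1}}\bigr),
\]
so $(b,f)$ and $(c,g)$ are conjugate \IFF there are $d\in B$ and $h\in A^{(B)}$ with $dbd^{-1}=c$ -- that is $db=cd$, the first bullet -- and with $h^{bd^{-1}}f^{d^{-1}}(h^{-1})^{d^{-1}}=g$ in $A^{(B)}$. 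The next step is to evaluate this last equation at the points $tb^jd^{-1}$, where $t$ ranges over a full system $T$ of $\gen{b}$-coset representatives and $j$ over $\Z$ (over $\{0,\dots,N-1\}$ if $\ord(b)=N<\infty$); these points run over $B$ exactly once, and since $h^{bd^{-1}}(x)=h(xdb^{-1})$, $f^{d^{-1}}(x)=f(xd)$ and $(h^{-1})^{d^{-1}}(x)=h(xd)^{-1}$, the equation becomes the system
\[
g(tb^jd^{-1})=h(tb^{j-1})\,f(tb^j)\,h(tb^j)^{-1}\qquad(t\in T,\ j\ \text{as above}),
\]
which I will call $(\ast)$. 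The equations for distinct $t$ involve disjoint blocks of values of $h$ (and of $f$ and $g$), so everything reduces to the following: given $d$ with $dbd^{-1}=c$, the system $(\ast)$ admits a solution $h\in A^{(B)}$ \IFF the stated $\pi$-conditions hold.

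For the forward implication, assume $(\ast)$ holds for some $h\in A^{(B)}$, fix $t\in T$, and multiply the instances of $(\ast)$ for that $t$ in increasing order of $j$. On the right-hand side the trailing factor $h(tb^j)^{-1}$ of the $j$-th term cancels the leading factor $h(tb^{(j+1)-1})=h(tb^j)$ of the $(j+1)$-th term, so the product telescopes. If $\ord(b)=\infty$, the support of $h$ on the coset is finite, so the two extreme surviving $h$-values are trivial and one obtains $\pi_{t,b}^{(d)}(g)=\pi_{t,b}(f)$. If $\ord(b)=N<\infty$, the exponents of $b$ are read modulo $N$ (so $tb^{-1}=tb^{N-1}$) and the only $h$-value surviving the telescoping is the one at the wrap-around, yielding $\pi_{t,b}^{(d)}(g)=h(tb^{N-1})\,\pi_{t,b}(f)\,h(tb^{N-1})^{-1}$, i.e.\ conjugacy in $A$. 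Since $f$ and $g$ have finite support these conditions are automatic for all but finitely many $t$, and replacing $t$ by $tb^k$ only cyclically permutes the factors of $\pi_{t,b}(f)$ and of $\pi_{t,b}^{(d)}(g)$ (hence conjugates them), so the criterion is independent of the choice of $T$.

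For the backward implication I would build $h$ one coset at a time. On a coset $t\gen{b}$ the equation $(\ast)$ with $j\ge 1$ determines $h(tb^j)$ recursively from $h(tb^{j-1})$, namely $h(tb^j)=g(tb^jd^{-1})^{-1}h(tb^{j-1})f(tb^j)$, and $(\ast)$ with $j\le-1$ propagates to the left; hence once the single ``seed'' value $h(t)$ is chosen, every instance of $(\ast)$ with $j\neq 0$ (resp.\ $j\in\{1,\dots,N-1\}$) holds by construction. If $\ord(b)=\infty$, set $h\equiv 1$ on the whole left tail of the coset and propagate to the right; finite support of $f$ and $g$ makes $h$ eventually constant on the right, and the telescoping computation of the previous paragraph forces that constant to be $1$ once we use $\pi_{t,b}^{(d)}(g)=\pi_{t,b}(f)$, so $h$ has finite support on the coset and $(\ast)$ holds for all $j$. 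If $\ord(b)=N<\infty$, the only instance of $(\ast)$ not yet guaranteed is $j=0$; expressing $h(tb),\dots,h(tb^{N-1})$ through $h(t)$ and substituting, this instance turns into the single requirement that $h(t)f(t)^{-1}$ conjugate $\pi_{t,b}(f)$ to a specific conjugate of $\pi_{t,b}^{(d)}(g)$ (depending only on $f$, $g$, $d$, $t$), and by the hypothesis $\pi_{t,b}(f)\sim_A\pi_{t,b}^{(d)}(g)$ such an $h(t)$ exists. On the finitely many cosets where $f$ and $g$ are trivial one takes $h\equiv 1$; assembling over all cosets yields $h\in A^{(B)}$ satisfying $(\ast)$, and then $(d,h)$ is the desired conjugator.

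I expect the main obstacle to be the finite-order case of the backward direction: one has to verify that ``closing the cycle'' around a coset $t\gen{b}$ is consistent exactly under the conjugacy hypothesis, identify the correct seed value $h(t)$, and keep the bookkeeping of $b$-exponents modulo $N$ straight throughout. The infinite-order finite-support argument and the independence of the criterion from the choice of $T$ are then routine once the identity $(\ast)$ and its telescoped products are in place.
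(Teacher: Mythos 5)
The paper does not actually prove this proposition: it is quoted verbatim from Matthews \cite[Prop.~3.5 and 3.6]{Matthews66} and used as a black box (the paper only proves the auxiliary \prettyref{lem:rep_choice} and the derived criterion \prettyref{prop:conjugacyCrit}). So there is no in-paper proof to compare against; what you have written is a reconstruction of Matthews' original argument, and it is correct. Your computation of $(d,h)(b,f)(d,h)^{-1}$ matches the multiplication rules in \prettyref{sec:wreath}, the pointwise system $(\ast)$ is the right reduction, the telescoping in the forward direction correctly produces equality of $\pi_{t,b}(f)$ and $\pi_{t,b}^{(d)}(g)$ when $\ord(b)=\infty$ and conjugation by $h(tb^{N-1})$ when $\ord(b)=N<\infty$, and in the backward direction the recursion $h(tb^j)=g(tb^jd^{-1})^{-1}h(tb^{j-1})f(tb^j)$ together with the seed analysis does close the cycle: the $j=0$ equation becomes $h(t)f(t)^{-1}\,\pi_{t,b}(f)\,\bigl(h(t)f(t)^{-1}\bigr)^{-1}=g(td^{-1})^{-1}\pi_{t,b}^{(d)}(g)\,g(td^{-1})$, which is solvable for $h(t)$ exactly when $\pi_{t,b}(f)\sim\pi_{t,b}^{(d)}(g)$, so the obstacle you flag at the end is in fact already resolved by your own computation. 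Your observation that the criterion is invariant under changing representatives is exactly the content of the paper's \prettyref{lem:rep_choice}. Two cosmetic points: the sentence ``on the finitely many cosets where $f$ and $g$ are trivial one takes $h\equiv 1$'' should read ``on all but finitely many cosets, namely those where $f$ and $g$ are trivial''; and you should note explicitly that the equation $h^{bd^{-1}}f^{d^{-1}}(h^{-1})^{d^{-1}}=g$ on the coset $t\gen{b}$ only involves $g$ on $t\gen{b}d^{-1}$, and these sets partition $B$ as $t$ ranges over $T$, so the per-coset constructions really do assemble into a well-defined $h\in A^{(B)}$ of finite support.
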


	\begin{example}
		Let $G= \Z_2 \wr \Z$ be the \emph{Lamplighter group} and let $(b,f),(c,g) \in G$ with $c,b \in \Z$, $f,g \in \Z_2^{(\Z)}$. We can view $f$ and $g$ as finite subsets of $\Z$ (\ie we identify $f$ with $\supp(f)$). Now the point-wise addition in $\Z_2^{(\Z)}$ becomes the symmetric difference $\triangle$ of subsets and we obtain the multiplication rule  $(b,f)(c,g) = (b+ c,(f+c)\mathop{\triangle} g )$ where $f +c$ is defined as $\oneset{f_1 + c, \dots, f_n + c}$ for $f = \oneset{f_1, \dots, f_n}$. 
		Now, $T = \oneset{0, \dots, b-1}$ if $b \neq 0$ and $T = \Z$ if $b=0$. For $t \in T$ and $d \in \Z$ we have 
		$$\pi_{t,b}^{(d)}(f) = \abs{\set{f_i \in f}{f_i \equiv t -d \,\bmod b}}\bmod 2.$$

		\prettyref{prop:conjugacy criterion} tells us that $(b,f) \sim (c,g)$ if, and only if, $b=c$ and there is some $d \in \Z$ such that 
		$$\abs{\set{f_i \in f}{f_i \equiv t \,\bmod b}}\equiv \abs{\set{g_i \in g}{g_i \equiv t -d \,\bmod b}} \mod 2$$
		for all $t \in T$ (or equivalently for all $t \in \Z$).
		
		In particular, $(1,f) \sim (1,g)$ as soon as $\abs{f} \equiv \abs{g}\, \bmod 2$ and  $(0,f) \sim (0,g)$ if, and only if, there is some $x \in \Z$ with $f = g+x$.
	\end{example}

	In order to derive a criterion for conjugacy, which is more suitable for working in \TC or \L,  \cite{MiasnikovVW16} follows the outline of \cite{Matthews66}. For completeness, we will give a similar criterion in \prettyref{prop:conjugacyCrit} and we will show how it follows from \prettyref{prop:matthews}.
	
	\begin{lemma}\label{lem:rep_choice}
		Let $c,d,e, r,s\in B$ with $d\langle c\rangle= e \langle c\rangle$ and $r\langle c\rangle= s \langle c\rangle$. Then for every $g\in A^{(B)}$, we have $\pi_{r,c}^{(d)}(g) \sim \pi_{s,c}^{(e)}(g)$ and, if $c$ has infinite order, we have $\pi_{r,c}^{(d)}(g) = \pi_{s,c}^{(e)}(g)$.
	\end{lemma}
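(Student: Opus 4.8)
The plan is to reduce the statement to a direct manipulation of the defining product for $\pi$. First I would convert the two coset hypotheses into element identities: from $d\gen{c} = e\gen{c}$ we get $d^{-1}e = c^{m}$, i.e.\ $e = dc^{m}$, for some $m\in\Z$, so $e^{-1}=c^{-m}d^{-1}$; and from $r\gen{c} = s\gen{c}$ we get $s = rc^{\ell}$ for some $\ell\in\Z$. Since all powers of $c$ commute, substituting these into the arguments occurring in $\pi_{s,c}^{(e)}(g)$ gives $sc^{j}e^{-1} = rc^{\,j+\ell-m}d^{-1}$ for every $j$; hence $\pi_{s,c}^{(e)}(g)$ is the product of the elements $g(rc^{\,j+\ell-m}d^{-1})$ taken over the same range of $j$, and in the same order, as the product $\pi_{r,c}^{(d)}(g) = \prod_j g(rc^{j}d^{-1})$.

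Now I would split on $\ord(c)$. If $\ord(c)=\infty$, the shift $j\mapsto j+\ell-m$ is an order-preserving bijection of $\Z$, so reindexing the (finitely supported) product by it turns $\prod_j g(rc^{\,j+\ell-m}d^{-1})$ into $\prod_j g(rc^{j}d^{-1})$ with the factors still ordered by increasing exponent; therefore $\pi_{s,c}^{(e)}(g) = \pi_{r,c}^{(d)}(g)$, which covers both claims in this case. If $\ord(c)=N<\infty$, write $x_i = g(rc^{i}d^{-1})$ for $i\in\{0,\dots,N-1\}$ and let $p\in\{0,\dots,N-1\}$ with $p\equiv \ell-m \pmod N$; then $\pi_{r,c}^{(d)}(g) = x_0x_1\cdots x_{N-1}$ while $\pi_{s,c}^{(e)}(g) = x_p x_{p+1}\cdots x_{N-1}x_0\cdots x_{p-1}$, which is a cyclic rotation of the first product. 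Setting $u = x_0\cdots x_{p-1}$, this rotation equals $u^{-1}(x_0\cdots x_{N-1})u$, so $\pi_{r,c}^{(d)}(g)\sim\pi_{s,c}^{(e)}(g)$, as required.

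I do not anticipate a real obstacle; the argument is essentially bookkeeping. The two points that need care are getting the sides of the cosets right (so that the discrepancies $d^{-1}e$ and $r^{-1}s$ are genuinely powers of $c$ landing on the correct side after inversion), and, in the finite-order case, remembering that the ordering of the factors of $\pi_{t,b}^{(d)}$ is only meaningful cyclically, so that the comparison really is between cyclic rotations of a length-$N$ word and not between two distinct linear orderings. Everything else rests on the commutativity of powers of a single element~-- the same fact used to put elements of $A^{(B)}$ into normal form in \prettyref{sec:wreath}.
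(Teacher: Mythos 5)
Your proof is correct and follows essentially the same route as the paper: translate the coset equalities into powers of $c$, substitute into the defining product so that the two quantities differ only by a shift of the index, and conclude by reindexing (infinite order) or by observing that the shift is a cyclic rotation and hence a conjugation (finite order). The only cosmetic difference is that you substitute into $\pi_{s,c}^{(e)}(g)$ rather than $\pi_{r,c}^{(d)}(g)$, and you spell out the rotation-conjugation identity $vu = u^{-1}(uv)u$ which the paper leaves implicit.
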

	
	\begin{proof}
		Since $d\langle c\rangle= e \langle c\rangle$ and $r\langle c\rangle=s \langle c\rangle$, there are integers $p,q$ for which $d=ec^{p}$ and $r=sc^{q}$; hence,
		$$\pi_{r,c}^{(d)}(g) = \prod\limits_{k} g(rc^kd^{-1})  = \prod\limits_k g(sc^{q}c^kc^{-p}e^{-1}) = \prod\limits_k g(s c^{k + q - p}e^{-1}).$$
		In the infinite order case, the last product in the above equation is equal to $\prod_k g(sc^ke^{-1}) = \pi_{s,c}^{(e)}(g)$, in the finite order case it is a cyclic permutation of the factors in the product $\prod_k g(sc^ke^{-1}) = \pi_{s,c}^{(e)}(g)$ and hence is conjugate to $\pi_{s,c}^{(e)}(g)$. 
		\end{proof}

	\begin{proposition}\label{prop:conjugacyCrit}%[\cite{MiasnikovVW16}]
		Let $x=(b,f)$ and $y=(c,g)$ be two elements of $A\wr B$ with $\supp(f) = \oneset{b_1, \dots, b_n}$ and $\supp(g)=\{\beta_1, \ldots, \beta_m\}$. Let $ \wt T= \set{\beta_i\beta_j^{-1}b_k}{1\leq i,j \leq m, 1\leq k \leq n}$. If $b$ and $c$ are not conjugate in $B$, then $x$ and $y$ are not conjugate in $A\wr B$. Otherwise, we distinguish the following cases: 
		\begin{enumerate}
			\item Suppose $\pi_{t,b}(f)=1$ for all $t \in \supp(f)$. Then $x\sim y$ if, and only if, $\pi_{s,c}(g)=1$ for all $s \in \supp(g)$.
			\item Suppose there exists some $t \in \supp(f)$ such that $\pi_{t,b}(f)\neq 1$. Then $x\sim y$ if, and only if, there is some $d\in\oneset{\beta_1^{-1}t, \ldots, \beta_m^{-1}t}$ such that $db=cd$ and 
			\begin{enumerate}
				\item $\pi_{t'\!,b}(f) = \pi_{t'\!,b}^{(d)}(g)$ for all $t' \in\wt T$ if $\ord(b)=\infty$, or
				\item $\pi_{t'\!,b}(f) \sim \pi_{t'\!,b}^{(d)}(g)$ for all $t' \in \wt T$ if $\ord(b)$ is finite. 
			\end{enumerate}  
		\end{enumerate}
	\end{proposition}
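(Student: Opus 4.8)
The plan is to deduce \prettyref{prop:conjugacyCrit} from Matthews' criterion \prettyref{prop:matthews} together with \prettyref{lem:rep_choice}. Two elementary facts will be used throughout. First, whenever $db=cd$ in $B$ we have $b^jd^{-1}=d^{-1}c^j$, so comparing the defining products term by term gives
\[\pi_{t',b}^{(d)}(g)=\pi_{t'd^{-1},c}(g).\]
Second, if the coset $t'\gen{b}$ misses $\supp(f)$ then every factor of $\pi_{t',b}(f)=\prod_j f(t'b^j)$ is trivial, hence $\pi_{t',b}(f)=1$; and by \prettyref{lem:rep_choice} the values $\pi_{t',b}(f)$ and $\pi_{t',b}^{(d)}(g)$ change only by a simultaneous conjugation (and not at all when $\ord(b)=\infty$) under a change of coset representative $t'$, so the condition in \prettyref{prop:matthews} may be read as ``for all $t'\in B$'' rather than over a fixed transversal. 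The case $b\not\sim_B c$ is immediate, since $db=cd$ forces $c=dbd^{-1}$.

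For Case~1 I would first observe that $\pi_{t,b}(f)=1$ for all $t\in\supp(f)$ in fact forces $\pi_{t,b}(f)=1$ for \emph{all} $t\in B$ (cosets meeting $\supp(f)$ are handled by coset-invariance, those missing it by the second fact). Since $b\sim_B c$, fix some $d_0$ with $d_0b=cd_0$. By \prettyref{prop:matthews} and the identity above, $x\sim y$ holds if, and only if, $\pi_{s,c}(g)=1$ for all $s\in B$ (as $t'\mapsto t'd_0^{-1}$ is a bijection of $B$); applying the same two facts to $g$, this is equivalent to $\pi_{s,c}(g)=1$ just for $s\in\supp(g)$.

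For Case~2, fix $t\in\supp(f)$ with $\pi_{t,b}(f)\neq1$. For the forward implication, take $d$ as supplied by \prettyref{prop:matthews}; since $\pi_{t,b}(f)\neq1$ is conjugate (or equal) to $\pi_{td^{-1},c}(g)=\prod_j g(td^{-1}c^j)$, some factor is non-trivial, i.e.\ $td^{-1}c^{j}=\beta_i$ for suitable $i,j$, so $d':=\beta_i^{-1}t=c^{-j}d$ lies in the prescribed set. One checks $d'bd'^{-1}=c$, and \prettyref{lem:rep_choice} lets one absorb the $c^{j}$ to see that the $\pi$-conditions still hold for $d'$, hence (by coset-invariance) for all $t'\in\wt T$. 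The converse implication is where the finiteness of $\wt T$ matters: given $d$ in the prescribed set satisfying the $\pi$-conditions on $\wt T$, one checks the hypotheses of \prettyref{prop:matthews} at an arbitrary $t'\in B$ via a three-way split. If $t'\gen{b}$ meets $\supp(f)$, it contains some $b_k=\beta_i\beta_i^{-1}b_k\in\wt T$. If not, but $t'd^{-1}\gen{c}$ meets $\supp(g)$, say $t'd^{-1}=\beta_ic^{l}$, then using $c^{l}d=db^{l}$ (which follows from $db=cd$) together with $d=\beta_p^{-1}t$ and $t=b_k$ for some $k$, one gets $t'b^{-l}=\beta_id=\beta_i\beta_p^{-1}b_k\in\wt T\cap t'\gen{b}$. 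Otherwise both coset functions $t'\mapsto\pi_{t',b}(f)$ and $t'\mapsto\pi_{t',b}^{(d)}(g)$ vanish at $t'$. In every case coset-invariance transports the hypothesis on $\wt T$ to the claim at $t'$, so \prettyref{prop:matthews} yields $x\sim y$.

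The routine parts are the term-by-term product manipulations and the coset arithmetic; the point that needs genuine care is the converse of Case~2 — namely that \emph{every} coset on which $t'\mapsto\pi_{t',b}(f)$ and $t'\mapsto\pi_{t',b}^{(d)}(g)$ could disagree already has a representative of the form $\beta_i\beta_j^{-1}b_k$. This is precisely what forces $d$ into $\oneset{\beta_1^{-1}t,\ldots,\beta_m^{-1}t}$ with $t\in\supp(f)$, and it hinges on the relation $c^{l}d=db^{l}$. The degenerate cases $\supp(f)=\es$ (which falls under Case~1) and $\supp(g)=\es$ with $\supp(f)\neq\es$ (where $\wt T$ and the set of candidate $d$ are both empty, so both sides are checked directly and $x\not\sim y$) cause no trouble.
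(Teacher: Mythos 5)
Your proposal is correct and follows essentially the same route as the paper: it derives the criterion from Matthews' \prettyref{prop:matthews} via the coset-invariance of \prettyref{lem:rep_choice}, with the same key computation showing that every coset on which $\pi_{\cdot,b}(f)$ or $\pi_{\cdot,b}^{(d)}(g)$ can be non-trivial contains an element $\beta_i\beta_j^{-1}b_k$ of $\wt T$ (the paper packages this as the auxiliary claim \prettyref{eq:pi0} for a suitably chosen transversal $T\supseteq\wt T$, while you phrase it as a pointwise three-way split over $t'\in B$). The normalization $d=\beta_i^{-1}tb^l\rightsquigarrow d'=\beta_i^{-1}t$ via $c^{-j}d=db^{-j}$ and the treatment of the degenerate supports also match the paper's argument.
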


	\begin{proof}%[Proof of \prettyref{prop:conjugacyCrit}]  
		We have to show that the conditions of \prettyref{prop:conjugacyCrit} imply the condition of \prettyref{prop:matthews}. The proof follows the one of \cite[Thm.\ B]{Matthews66}. Let $T$ be the full system of $\gen{b}$-coset representatives of \prettyref{prop:matthews}. 
		
		First, observe that by  \prettyref{lem:rep_choice} the condition of \prettyref{prop:matthews} is invariant under change of the system of representatives $T$. Moreover, we can add multiple representatives of one coset to $T$ (\ie we do not need to require that  $ t\gen{b} \cap t'\gen{b}  = \emptyset$ for $t \neq t' \in T$ ) as long as $T \gen{b}=B$, without changing the condition of \prettyref{prop:matthews}. Hence, we can assume that $\wt T \sse T$ and \begin{align}(T \setminus \wt T) \cap \wt T \gen{b} = \emptyset\label{eq:TwtT}.\end{align}
		Let us show that 
		\begin{align}
		\pi_{t,b}(f)= 1 &&\text{and}&& \pi_{t\!,b}^{(d)}(g) = 1  &&&\text{for } t \in T \setminus \wt T \text{ and }d \in\oneset{\beta_1^{-1}t, \ldots, \beta_m^{-1}t}.\label{eq:pi0}
		\end{align}
		Let $t \in T$. If $\pi_{t,b}(f)\neq 1$, then $tb^\ell \in \supp(f) \subseteq \wt T$ for some $\ell\in \Z$; hence, by \prettyref{eq:TwtT}, $t \in \wt T  $. 
		If $\pi_{t,b}^{(d)}(g) \neq 1$ for some $d\in\oneset{\beta_1^{-1}b_k, \ldots, \beta_m^{-1}b_k}$, then $tb^\ell d^{-1} \in \supp(g)$ for some $\ell\in \Z$. Therefore, there is some $i \in \oneset{1, \dots, m}$ with $\beta_i = tb^\ell d^{-1} =  tb^\ell b_k^{-1}\beta_j$. Hence, $tb^\ell = \beta_i\beta_j^{-1}b_k \in \wt T$ and, by \prettyref{eq:TwtT}, $t=\beta_i\beta_j^{-1}b_k$. This shows \prettyref{eq:pi0}.

		Now consider $x=(b,f)$ and $y=(c,g)$. If $b$ and $c$ are not conjugate, then $x$ and $y$ are certainly not conjugate. If they are, we consider the following two cases:
		\begin{enumerate}
			\item 
			Suppose $\pi_{t,b}(f) = 1$ for all $t \in \wt T$. By the same argument as for \prettyref{eq:pi0}, this is the case if, and only if, $\pi_{t,b}(f) = 1$ for all $t \in T$. Let $S$ be a full system of $\gen{c}$-coset representatives. By \prettyref{prop:conjugacy criterion}, $x\sim y$ if, and only if, there is some $d \in B$ such that $db=cd$ and $\pi_{t,b}^{(d)}(g) = 1$ for all $t \in T$. Now, 
			\begin{equation*}
			\pi_{t,b}^{(d)}(g) = \prod\limits_{j} g(tb^jd^{-1}) = \prod\limits_{j} g(td^{-1}c^j) = \pi_{td^{-1}\!,c}(g). 
			\end{equation*}
			For each $t\in T$, there is some $s \in S$ with $td^{-1} \in s \gen{c}$ and vice-versa. 
			By \prettyref{lem:rep_choice},  $\pi_{td^{-1}\!,c}(g) \sim \pi_{s,c}(g)$ (resp.\ $\pi_{td^{-1}\!,c}(g) = \pi_{s,c}(g)$), and it follows that 
			$$\pi_{td^{-1}\!, c} (g) =1 \text{ for all } t \in T \;\iff\; \pi_{s,c}(g) = 1 \text{ for all } s \in S.$$
			Thus, $x\sim y$ if, and only if,  there is some $d \in B$ with $db=cd$ and $\pi_{s,c}(g) =1$ for all $s \in S$. 
			
			Assume that $\pi_{s,c}(g) \neq 1$ for some $s \in S$. Then $sc^\ell= \beta_i \in \supp(g) $ for some $\ell \in \Z$ and so $\pi_{\beta_i,c}(g) \neq 1$. Thus, $\pi_{s,c}(g) =1$ for all $s \in S$ if, and only if, $\pi_{s,c}(g) =1$ for all $s \in \supp(g)$.
			
			\item 
			Now, suppose that $\pi_{t,b}(f) \neq 1$ for some $t \in \supp(f)$ and let $x$ and $y$ be conjugate. By \prettyref{prop:conjugacy criterion}, there is some $d\in B$ such that $db=cd$ and $\pi_{t,b}(f)$ is conjugate (resp.\ equal) to $\pi_{t,b}^{(d)} (g)$.
			For this $d$ we have $\pi_{t,b}^{(d)}(g) \neq 1$. In particular, there is some $l\in \mathbb{Z}$ with $g(tb^ld^{-1}) \neq 1$ and so $ tb^ld^{-1} = \beta_i \in \supp(g)$ for some $i \in \oneset{1,\dots, m}$. Hence, $d \in \set{ \beta_1^{-1}tb^l, \ldots, \beta_m^{-1}tb^l}{l \in\Z}$. 
			We can assume $l=0$ because, if $d = eb^l$, then $db=cd$ if, and only if, $eb=ce$ and, by \prettyref{lem:rep_choice}, for every $t' \in \wt T$ we have	
			$\pi_{t'\!,b}^{(d)}(g)= \pi_{t'\!,b}^{(e)}(g)$ (resp. $\pi_{t'\!,b}^{(d)}(g) \sim \pi_{t'\!,b}^{(e)}(g)$). Thus, for some $d\in\oneset{\beta_1^{-1}t, \ldots, \beta_m^{-1}t}$ with $db=cd$ we have $\pi_{t'\!,b}(f) = \pi_{t'\!,b}^{(d)}(g)$ (resp.  $\pi_{t'\!,b}(f) \sim \pi_{t'\!,b}^{(d)}(g)$) for all $t' \in \wt T$.
			
			The converse implication follows immediately from \prettyref{eq:pi0} and \prettyref{prop:conjugacy criterion}.
		\end{enumerate}\vspace{-6mm}
		\end{proof}

	\begin{theorem}\label{thm:cpwreath}
		Let $A$ and $B$ be arbitrary finitely generated groups. We have
		\begin{itemize}
			\item  $\CP(A \wr B) \in \TC(\CP(A),\CP(B),\PP(B))$,
			\item  $\CP(A \wr B) \in \TC(\CP(A),\CP(B),\CMMP(B))$ if $B$ is torsion-free,
			\item $\CP(A \wr B) \in \TC(\CP(A),\CP(B),\CSMP(B))$ if $A$ is abelian.
		\end{itemize}
	\end{theorem}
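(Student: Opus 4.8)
The plan is to read the theorem off \prettyref{prop:conjugacyCrit}, implementing its (entirely effective) criterion by a $\TC$ circuit whose oracle gates are exactly the ones named in each of the three cases. On input of two words $v,w$ of length $n$, I would first apply \prettyref{lem:reduce}, which is $\Ac0(\WP(A),\WP(B)) \sse \Ac0(\CP(A),\CP(B))$, to compute normal forms $v =_G (b,f)$ and $w =_G (c,g)$ with $f = ((b_1,a_1),\dots,(b_n,a_n))$ and $g = ((\beta_1,\alp_1),\dots,(\beta_n,\alp_n))$, lists of $n$ pairs of words of length $\le n$ over $\Sig_B$ resp.\ $\Sig_A$; the genuine support pairs are those with $a_i\neq_A 1$ resp.\ $\alp_j\neq_A 1$, as decided by $n$ parallel $\WP(A)$ queries. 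One $\CP(B)$ query on $(b,c)$ decides whether $b\sim_B c$; if not, the circuit rejects, which is correct by the first sentence of \prettyref{prop:conjugacyCrit}.

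Next, for all $b_i\in\supp(f)$ in parallel I would compute $\pi_{b_i,b}(f)$ via \prettyref{lem:piprods} — in $\TC(\PP(B))$, or $\TC(\CMMP(B))$ when $B$ is torsion-free, or $\TC(\CSMP(B))$ when $A$ is abelian — and test each for triviality with $\WP(A)$. If all of them are trivial we are in case~1 of \prettyref{prop:conjugacyCrit}: compute the $\pi_{\beta_j,c}(g)$ likewise and accept iff they are all trivial. Otherwise we are in case~2; select in $\TC$ the least $i_0$ with $\pi_{b_{i_0},b}(f)\neq_A 1$ and set $t := b_{i_0}$. Then the only conjugator candidates that need to be tried are the $n$ words $d\in D := \{\beta_1^{-1}t,\dots,\beta_n^{-1}t\}$ over $\Sig_B$ (forming the words $\beta_j^{-1}$ via $\Sig_B =_G \Sig_B^{-1}$ and the concatenations in $\Ac0$), and the only coset representatives to be tested are the at most $n^3$ words $t' = \beta_i\beta_j^{-1}b_k\in\wt T$. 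For each pair $(d,t')$ in parallel I would check $db =_B cd$ with $\WP(B)$, compute $\pi_{t',b}(f)$ and $\pi_{t',b}^{(d)}(g)$ via \prettyref{lem:piprods}, and compare them using $\WP(A)$ (equality, when $\ord(b)=\infty$) resp.\ $\CP(A)$ (conjugacy, when $\ord(b)$ is finite); the circuit accepts iff some $d\in D$ satisfies $db=_Bcd$ and passes the comparison for every $t'\in\wt T$. Correctness is exactly \prettyref{prop:conjugacyCrit}(2).

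The case split on $\ord(b)$ is resolved within the allotted oracles: if $b=_B 1$ (a $\WP(B)$ query) then $\ord(b)$ is finite; otherwise $b\neq_B b^{-1}$, so $\ord(b)$ is finite iff $\PP(B)$ on $(b,b^{-1})$ returns a (necessarily positive) value and infinite otherwise — the finiteness test recalled in \prettyref{sec:prelims}. When $B$ is torsion-free this collapses to the single query ``$b=_B 1$?'' ($\le\CMMP(B)$), and when $A$ is abelian the split is irrelevant, since conjugacy and equality coincide in $A$, so the equality test via $\WP(A)$ always suffices. Composing, the whole procedure is an $\Ac0$ circuit built from constantly many oracle gates for $\CP(A),\CP(B),\PP(B)$ (resp.\ the weaker $B$-oracle) together with the $\TC$ subcircuits of \prettyref{lem:reduce} and \prettyref{lem:piprods}; by the standard closure properties of $\TC$ this is a $\TC$ circuit with oracle gates for those problems, giving the three claimed reductions. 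Uniformity is routine from the layered description.

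There is no real obstacle here beyond bookkeeping: the point that must be gotten right is that \prettyref{prop:conjugacyCrit} has already confined both the conjugator $d$ and the coset representatives $t'$ to explicit polynomial-size sets of polynomial-length words over $\Sig_B$, so that the equation $db=cd$ is never solved constructively but merely checked on each candidate via $\WP(B)$, and that the $\pi$-evaluations of \prettyref{lem:piprods} together with the order test of $b$ can each be carried out with precisely the oracles available in the corresponding case. Everything else is $\Ac0$ plumbing around constantly many $\TC$ oracle calls.
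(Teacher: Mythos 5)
Your proposal is correct and follows essentially the same route as the paper: reduce to the pair form via \prettyref{lem:reduce}, precompute the $\pi$-values via \prettyref{lem:piprods} with the case-appropriate oracle, and evaluate the criterion of \prettyref{prop:conjugacyCrit} as a Boolean combination of $\WP$/$\CP$ queries, resolving the order test for $b$ exactly as the paper does (via $\PP(B)$, collapsing to ``$b=_B1$?'' when $B$ is torsion-free and becoming irrelevant when $A$ is abelian). The only quibble is the phrase ``constantly many oracle gates''~-- the circuit uses polynomially many parallel oracle gates~-- but this does not affect the reduction.
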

	
	\begin{proof}%[of \prettyref{thm:cpwreath}] 
		By \prettyref{lem:reduce}, we may assume that the input is given as two pairs $(b,f)$ and $(c,g)$. 
		As before we write $\supp(f) = \oneset{b_1, \dots, b_n}$ and $\supp(g)=\{\beta_1, \ldots, \beta_m\}$. 
		By \prettyref{lem:piprods}, we can assume that $\pi_{t'\!,b}(f)$, $\pi_{t'\!,b}^{(d)}(g)$, and $\pi_{s,c}(g)$ for $d\in\oneset{\beta_1^{-1}t, \ldots, \beta_m^{-1}t}$,  $s \in \supp(g)$, and $t,t' \in \wt T$ are part of the input.

		Now, let us describe an \Ac0-circuit with oracle calls to the word and conjugacy problems of $A$ and $B$ which evaluates the criterion of \prettyref{prop:conjugacyCrit}. If $A$ is non-abelian and $B$ has torsion it also uses oracle gates for $\PP(B)$. 
		
		First, one call to the conjugacy problem in $B$ is performed for determining whether $b$ and $c$ are conjugate.
		Then, in the next stage the two cases can be distinguished by at most $|\wt T|$ calls to the word problem of $A$.
		Now, case (i) is simply a conjunction of calls to the word problem of $A$.
		Case (ii) is a disjunction over all possible values for $d$; for each value of $d$ it is again a conjunction of one call to the word problem of $B$ and several calls to the word problem of $A$ (case (ii a)) or the conjugacy problem in $A$ (case (ii b)). Cases  (ii a) and  (ii b) can be distinguished using the power problem in $B$. If $B$ is torsion-free, then the word problem suffices because in this case $\ord(b) < \infty$ if, and only if, $b=_B1$. If $A$ is abelian, then the conditions (ii a) and (ii b) are equivalent, \ie we are always in case (ii a) and there is no need for a check whether $\ord(b) < \infty$.
		To be more explicit, we can write down the circuit as a formula (for the general non-abelian case): 
		\begin{align*}
		(b,f) \sim (c,g) \iff b\sim_B c &\land \bigl(\text{(i)} \lor \text{(ii)}\bigr).
		\end{align*}
		Moreover, we have
	%	{	\allowdisplaybreaks
			\begin{align*}
			\text{(i)} &\iff \bigwedge_{i=1}^n \pi_{b_i,b}(f) =_A 1  \land  \bigwedge_{j=1}^m \pi_{\beta_j,c}(g) =_A 1,\\
			\text{(iii)} &\iff  \bigvee_{i=1}^n \left( \pi_{b_i,b}(f) \neq_A 1  \land \bigvee_{k=1}^m \left( \beta_k^{-1}b_ib=_Bc\beta_k^{-1}b_i \vphantom{\bigwedge_k \pi_{b_j}(f) } \right.\right.
			\\
			&\quad\quad\land\left.\left. \left[\left(\ord(b)=\infty \land \bigwedge_{t\in \wt T} \pi_{t,b}(f) =_A \pi_{t,b}^{(\beta_k^{-1}b_i)}(g)\right) \right.\right.\right.\\ 
			&\quad\qquad \lor \left.\left.\left.\left( \ord(b)<\infty \land \bigwedge_{t\in \wt T} \pi_{t,b}(f) \sim_A \pi_{t,b}^{(\beta_k^{-1}b_i)}(g)
			\right) \right]\right)\right).
			\end{align*}%}
		\end{proof}
	
	\begin{corollary}\label{cor:cpwreath}
		Let $A$ and $B$ be finitely generated groups and $d \geq 1$. Then
		\begin{itemize}
			\item  $\CP(A \wr^d B) \in \TC(\CP(A),\CP(B),\PP(B))$,
			\item  $\CP(A \wr^d B) \in \TC(\CP(A),\CP(B),\CMMP(B))$ if $B$ is torsion-free.
		\end{itemize}
	\end{corollary}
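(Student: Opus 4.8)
The plan is to prove \prettyref{cor:cpwreath} by induction on $d$, feeding \prettyref{thm:cpwreath} into itself. The base case $d=1$ is literally \prettyref{thm:cpwreath}, since $A\wr^1 B = A\wr B$. For the inductive step I would write $A\wr^d B = (A\wr^{d-1}B)\wr B$ and apply \prettyref{thm:cpwreath} with the first factor taken to be $A' := A\wr^{d-1}B$ and the second factor taken to be $B$. This requires noting that $A'$ is finitely generated, which follows inductively from the fact that a restricted wreath product of two finitely generated groups is finitely generated (it is generated by the generators of the two factors, as recalled in \prettyref{sec:wreath}). \prettyref{thm:cpwreath} then gives $\CP(A\wr^d B)\in\TC(\CP(A'),\CP(B),\PP(B))$, and $\CP(A\wr^d B)\in\TC(\CP(A'),\CP(B),\CMMP(B))$ when $B$ is torsion-free; note that here the group acting in the outermost wreath product is $B$ itself, so it is the power (resp.\ cyclic submonoid membership) problem of $B$ that appears, and nothing is required about $A'$ beyond its conjugacy problem.

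Next I would invoke the induction hypothesis $\CP(A')\in\TC(\CP(A),\CP(B),\PP(B))$ (resp.\ with $\CMMP(B)$ in place of $\PP(B)$ when $B$ is torsion-free) and compose the two reductions. Concretely, one takes the $\TC$-circuit-with-oracles for $\CP(A\wr^d B)$ and replaces every oracle gate querying $\CP(A')$ by the $\TC$-circuit-with-oracles computing $\CP(A')$ from $\CP(A),\CP(B),\PP(B)$; the result is still of constant depth and polynomial size, with majority gates and oracle gates only for $\CP(A),\CP(B),\PP(B)$, and $\mathsf{Dlogtime}$-uniformity is inherited from the two uniform families. This is just the transitivity of $\TC$ Turing reducibility (the special case where all oracles already lie in $\TC$ is recorded in \prettyref{sec:prelims}). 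The torsion-free variant is handled identically, carrying $\CMMP(B)$ through every level in place of $\PP(B)$.

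I do not expect a genuine obstacle here, since essentially all the content already sits in \prettyref{thm:cpwreath}; the one point one must get right is that for the right-iterated product $A\wr^d B$ the ``bottom'' group stays $B$ at every level of the recursion, so the oracle list never grows and only $\PP(B)$ (or $\CMMP(B)$) is ever needed. This is exactly what fails for the left-iterated product $A\mathop{^d\wr}B$, where the acting group grows with each additional wreath factor; that is precisely why \prettyref{sec:itwreath} has to study the power problem of wreath products separately, and is the reason the present corollary is stated only for right-iterated products.
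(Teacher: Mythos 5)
Your proof is correct and follows exactly the paper's (very terse) argument: the paper proves \prettyref{cor:cpwreath} as an ``immediate consequence of \prettyref{thm:cpwreath} by induction,'' which is precisely the induction on $d$ you carry out, with the key observation that the acting group in $A\wr^d B = (A\wr^{d-1}B)\wr B$ stays $B$, so the oracle list never grows. Your closing remark about why this fails for the left-iterated product also matches the paper's subsequent discussion.
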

	\begin{proof}
		Immediate consequence of \prettyref{thm:cpwreath} by induction.
		\end{proof}
	Notice that $A \wr^d B$ is not abelian (for non-trivial $A$ and $B$). Hence, it does \emph{not} follow that $\CP(A \wr^d B) \in \TC(\CP(A),\CP(B),\CSMP(B))$ even if $A$ is abelian.
	
	The following quite trivial observation turns out to be very useful.
	
	\begin{lemma}\label{lem:distorsion}
		Let $G$ be finitely generated by $\Sig$ and let the order of its torsion elements be uniformly bounded. Suppose there is a polynomial $p(n)$ such that for every $w \in \Sig^*$ which is non-torsion, the inequality $k \leq p(\Abs{w^k})$ is satisfied, where $\Abs{w^k}$ denotes the geodesic length of the group element $w^k$. Then $\PP(G) \in \Ac0(\WP(G))$.
	\end{lemma}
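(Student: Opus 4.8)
The plan is to use both hypotheses to bound, polynomially in the input length, every exponent $k$ that can occur in a positive instance $(v,w)$ of $\PP(G)$, so that all candidate exponents fit into $O(\log n)$ bits and can be tested in parallel with oracle calls to $\WP(G)$. Fix the uniform bound $N$ on the orders of torsion elements of $G$, set $L = \mathrm{lcm}(1,\dots,N)$, and let $p$ be the polynomial from the hypothesis, which we may assume monotone; $N$, $L$, $p$ depend only on $G$. On input $(v,w)$ of length $n$, I would first extract $v$ and $w$ and pad both to length $n$ (in \Ac0), and then, with a single oracle call to $\WP(G)$ on the word $v^{L}$ (length $L\cdot n$, a constant blow-up), decide whether $v$ has finite order: since $\ord(v) \le N$ whenever $v$ is torsion, $v$ is torsion if, and only if, $v^{L} =_G 1$.

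If $v$ is non-torsion, then $v^{k} =_G w$ has at most one solution $k \in \Z$ (from $v^{k} =_G v^{k'}$ one gets $v^{k-k'} =_G 1$, hence $k = k'$), and any such $k$ satisfies $\abs{k} \le p(n)$: applying the hypothesis to the non-torsion word $v$ with exponent $\abs{k} \ge 1$ gives $\abs{k} \le p(\Abs{v^{\abs{k}}})$, and $v^{\abs{k}}$ equals $w$ or $w^{-1}$ in $G$, so $\Abs{v^{\abs{k}}} = \Abs{w} \le n$ and monotonicity of $p$ finishes the bound (the case $k = 0$ being trivial). So I would test in parallel, for each $k \in \oneset{-p(n), \dots, p(n)}$, whether $v^{k}w^{-1} =_G 1$; here for $k < 0$ the word $v^{k}$ is the formal inverse word $(v^{-1})^{\abs{k}}$, which is \Ac0-computable because $\Sig =_G \Sig^{-1}$, and $v^{k}w^{-1}$ has length at most $(p(n)+1)\cdot n$. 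Each such test is one oracle call to $\WP(G)$; as the candidate set is of polynomial size and hard-wired into the $n$-th circuit, forming all the words $v^{k}w^{-1}$ amounts only to wire routing, hence stays in \Ac0. At most one oracle gate outputs $1$, and the circuit outputs the binary encoding of the corresponding $k$ bit by bit (each output bit a disjunction over the candidates), or a designated ``no'' symbol if all outputs are $0$.

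If $v$ is torsion, then $\ord(v) \le N$, and if $v^{k} =_G w$ for some $k \in \Z$, then also $v^{k \bmod \ord(v)} =_G w$ with $k \bmod \ord(v) \in \oneset{0,\dots,N-1}$, and the smallest non-negative solution already lies in this range. Hence I would test $v^{k}w^{-1} =_G 1$ for the constantly many $k \in \oneset{0,\dots,N-1}$ (one $\WP(G)$ call each) and output the smallest such $k$, or ``no'' if none exists — a constant-size sub-circuit. Finally, a multiplexer driven by the torsion bit selects between the two cases. The whole circuit has constant depth and polynomial size, and with a systematic gate numbering its wiring and gate types are decidable in time $O(\log n)$, giving $\mathsf{Dlogtime}$-uniformity; thus $\PP(G) \in \Ac0(\WP(G))$.

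I expect the construction to be essentially routine; the only delicate point is the output convention. The requirement in the definition of $\PP$ to return the \emph{smallest non-negative} $k$ when $\ord(v) < \infty$ means the demanded answer in the torsion case may differ from the unique integer solution, while in the non-torsion case the solution is unique but possibly negative. This is exactly why both hypotheses are needed: the uniform bound on torsion orders lets us detect torsion with a constant-blow-up word-problem query and keeps the torsion-case search constant-size, and the polynomial distortion bound controls the search in the non-torsion case.
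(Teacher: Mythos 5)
Your proposal is correct and follows essentially the same route as the paper: test all candidate exponents in a polynomial-size range in parallel via oracle calls to $\WP(G)$, using the distortion bound for non-torsion $v$ and the uniform order bound for torsion $v$. The paper's proof is a two-line version of this (testing all $k$ with $-p(\abs{w}) \leq k \leq \max\oneset{p(\abs{w}), D}$ in one sweep); your explicit torsion test via $v^{L}$ and the careful handling of the ``smallest non-negative $k$'' output convention are just a more detailed writing-out of the same argument.
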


	\begin{proof}%[Proof of \prettyref{lem:distorsion}]
		Let $D$ be a bound on the order of torsion elements of $G$. For input words $v,w \in \Sig^*$ for the power problem, simply test whether $v^k=_G w$ for all $k$ with $ - p(\abs{w}) \leq k \leq \max\oneset{p(\abs{w}), D}$ in parallel using the word problem of $G$.
		\end{proof}
	
	The second condition of \prettyref{lem:distorsion} means that there is a uniform polynomial bound on the distortion of infinite cyclic subgroups. This is satisfied by abelian groups (with $p$ being linear). 
	Since the conjugacy problem in abelian groups is in \TC (as it is the word problem), we obtain the following corollary of \prettyref{thm:cpwreath}.
	
	\begin{corollary}
		Let $A$ and $B$ be \fg\ abelian groups and $d \geq 1$. Then  $\CP(A \wr^d B) \in \TC$.
	\end{corollary}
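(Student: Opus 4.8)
The plan is to obtain this as a direct consequence of \prettyref{cor:cpwreath}, so that the only real task is to check that each of the three oracles occurring there lies in \TC once $A$ and $B$ are abelian. By \prettyref{cor:cpwreath} we have $\CP(A \wr^d B) \in \TC(\CP(A),\CP(B),\PP(B))$; since $\TC(L_1,\dots,L_k) = \TC$ whenever all $L_i \in \TC$, it suffices to show $\CP(A), \CP(B), \PP(B) \in \TC$.

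For the conjugacy problems this is immediate: in an abelian group conjugacy coincides with equality, so $\CP(A) = \WP(A)$ and $\CP(B) = \WP(B)$, and finitely generated abelian groups have word problem in \TC by \prettyref{ex:abelian}. The one point that needs an argument is $\PP(B) \in \TC$, and here I would invoke \prettyref{lem:distorsion}. Its first hypothesis holds because a finitely generated abelian group splits as $B \cong \Z^r \oplus T$ with $T$ finite, so the order of every torsion element divides $\abs{T}$ and is therefore uniformly bounded. For the second hypothesis, let $w \in \Sig_B^*$ be non-torsion. Then its image in $B/T \cong \Z^r$ is nonzero, so composing the quotient map with a suitable coordinate projection $\Z^r \to \Z$ gives a homomorphism $\psi \colon B \to \Z$ with $\psi(w) \neq 0$; consequently $\abs{\psi(w^k)} = \abs{k}\cdot\abs{\psi(w)} \geq \abs{k}$. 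On the other hand $\abs{\psi(g)} \leq C\Abs{g}$ for the constant $C = \max_{s \in \Sig_B} \abs{\psi(s)}$ and every $g \in B$, since a geodesic word for $g$ has $\Abs{g}$ letters from $\Sig_B$ and $\psi$ is additive. Combining these, $\abs{k} \leq C\Abs{w^k}$, so the polynomial $p(n) = Cn$ works and \prettyref{lem:distorsion} yields $\PP(B) \in \Ac0(\WP(B)) \subseteq \TC$. This completes the reduction and hence the proof.

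I do not expect any genuine obstacle: all the substantial work — the \TC reduction from conjugacy in iterated wreath products to the factors and the power problem in $B$ — is already contained in \prettyref{thm:cpwreath} and \prettyref{cor:cpwreath}, and what remains is the elementary observation that finitely generated abelian groups have uniformly bounded torsion and undistorted infinite cyclic subgroups, which is exactly what feeds \prettyref{lem:distorsion}. Alternatively one could prove $\PP(B) \in \TC$ by hand, decomposing $B$ into cyclic factors and solving $v^k = w$ over the free part (a unique rational candidate for $k$, to be tested for integrality) together with the corresponding congruences over the finite cyclic factors, all of which is integer arithmetic and thus in \TC; but routing through \prettyref{lem:distorsion} is shorter and reuses machinery already set up.
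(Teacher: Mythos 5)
Your proof is correct and follows essentially the same route as the paper: the paper likewise derives this corollary from \prettyref{cor:cpwreath} by observing that conjugacy in abelian groups is just the word problem and that \prettyref{lem:distorsion} applies to abelian groups (uniformly bounded torsion, linear distortion of infinite cyclic subgroups), giving $\PP(B)\in\TC$. Your explicit verification of the distortion hypothesis via a homomorphism to $\Z$ merely spells out what the paper leaves as a one-line remark.
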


	\paragraph{The role of the power problem.} 
	The following result is a complexity analog of the ``only if'' part of \cite[Thm.\ B]{Matthews66}, which only considers decidability. Note that for pure decidability, it does not matter if we consider $\CSMP(B)$, $\CMMP(B)$ or $\PP(B)$ since they can all be reduced to each other. 
	
	\begin{theorem}\label{thm:lowerBound}
		Let $A$ be \fg\ and  non-trivial. Then $\CSMP(B) \leq_{\mathrm{m}}^{\Ac0} \CP(A \wr B)$. If, moreover, $A$ is non-abelian, then $\CMMP(B) \leq_{\mathrm{m}}^{\Ac0} \CP(A \wr B)$.
	\end{theorem}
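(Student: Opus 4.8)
The plan is to produce, in $\Ac0$, from an instance $(g,h)$ over $\Sig_B$, a pair of words over $\Sig=\Sig_A\cup\Sig_B$ whose two members are conjugate in $A\wr B$ exactly when $h\in_B\gen{g}$, and then to verify this through the criterion of \prettyref{prop:conjugacyCrit} applied with equal base components. Fix a letter $a\in\Sig_A$ with $a\neq_A 1$, which exists since $A\neq 1$. I would output the pair
\[
\bigl(\; g\,a\,h^{-1}\,a\,h \;,\; g\,a\,a \;\bigr),
\]
read in $A\wr B$ as $x=(g,f)$, where $f(1)=a=f(h)$ and $f$ is trivial elsewhere (with $f(1)=a^2$ collapsing the two lamps when $h=_B 1$), and $y=(g,g')$, where $g'(1)=a^2$ and $g'$ is trivial elsewhere. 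For $\CMMP(B)$, with the extra assumption that $A$ is non-abelian, I would instead fix $a,a'\in\Sig_A$ with $aa'\neq_A a'a$ --- such a pair exists because a group generated by pairwise commuting letters is abelian --- and output $\bigl(g\,a\,h^{-1}\,a'\,h,\ g\,a\,a'\bigr)$. In both cases the map is evidently $\Ac0$-computable: it copies $g$, inserts a bounded number of letters, and appends $h$ once forwards and once reversed-and-inverted.

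For the verification I would take $b=c=g$ in \prettyref{prop:conjugacyCrit}, so the base components are $B$-conjugate, $\supp(f)\subseteq\{1,h\}$, $\supp(g')\subseteq\{1\}$, and hence $\wt T\subseteq\{1,h\}$. The one real computation is of the collapsed values $\pi_{t,g}$. If $h\in_B\gen{g}$ then $1$ and $h$ lie in the single coset $\gen{g}$, so $\pi_{t,g}(f)$ is trivial off $\gen{g}$, while on $\gen{g}$ it equals $a^2$ in the $\CSMP$-instance and, in the $\CMMP$-instance, equals $aa'$ whenever $h$ is a non-negative power of $g$ --- which includes every case with $\ord(g)<\infty$ --- but equals $a'a$ when $h=g^k$ with $k<0$ and $\ord(g)=\infty$; correspondingly $\pi_{1,g}(g')$ equals $a^2$, resp.\ $aa'$. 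If $h\notin_B\gen{g}$ then $1$ and $h$ lie in distinct cosets, so $f$ has exactly two $\gen{g}$-cosets with non-trivial collapse while $g'$ has at most one. From here the equivalence drops out: taking the conjugator $d=1$ settles all the ``yes'' cases (and the degenerate cases $h=_B 1$, where the two words are literally equal, and $a^2=_A 1$ in the $\CSMP$-instance, where one lands in case~(i) of \prettyref{prop:conjugacyCrit}, are absorbed); while in the ``no'' cases the candidate conjugators in case~(ii) of \prettyref{prop:conjugacyCrit} shrink to $d=1$ alone (or to none at all), and testing it fails --- at the coset $h\gen{g}$ when $h\notin_B\gen{g}$, and, when $h$ is only a negative power of $g$ and $\ord(g)=\infty$, because case~(ii) then demands the \emph{equality} $\pi_{1,g}(f)=\pi_{1,g}(g')$, i.e.\ $a'a=aa'$, which is false.

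The step I expect to take the most care is this final case analysis: one must track the four subcases ($h\in\gen{g}$ or not, $\ord(g)$ finite or not) while respecting the distinction in \prettyref{prop:conjugacyCrit} between the equality condition ($\ord(g)=\infty$) and the weaker conjugacy condition ($\ord(g)<\infty$). The idea that makes everything cohere is that non-commutativity of $a$ and $a'$ forces $aa'\neq a'a$ (and both $\neq 1$), so that the \emph{order} in which the lamps at $1$ and at $h$ are read off within their common coset faithfully records the sign of the exponent --- but only when $g$ has infinite order, which is exactly the situation distinguishing $\CMMP$ from $\CSMP$.
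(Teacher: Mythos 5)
Your construction is correct and takes essentially the same route as the paper: both reductions place lamps at $1$ and at the query element, invoke \prettyref{prop:conjugacyCrit} with equal base components, and in the non-abelian case use the order of the collapsed product $\pi_{t,g}$ to record the sign of the exponent. The only cosmetic differences are that the paper's $\CSMP$ gadget uses lamps $a$ and $a^{-1}$ compared against $(g,1)$, which avoids your separate handling of the degenerate case $a^2=_A 1$, and that in the $\CMMP$ gadget the paper assigns the single combined lamp to the first element and the two separated lamps to the second, rather than the other way around.
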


	Notice that \prettyref{thm:lowerBound} shows that in the cases that $A$ is abelian or $B$ torsion-free \prettyref{thm:cpwreath} is the best possible result one could expect. However, it is totally unclear how $\PP(B)$ could be reduced to $\CP(A \wr B)$ in \TC (even if the answer to the power problem is guaranteed to have polynomial size). Thus, there remains the possibility that \prettyref{thm:cpwreath} could be strengthened in the general case.
	\begin{proof}%[Proof of \prettyref{thm:lowerBound}]
		The first statement is simply due to the observation that the construction in \cite[Thm.\ B]{Matthews66} can be computed in \Ac0. We repeat the argument here: fix some $a \in \Sig_A^*$ with $a\neq_A 1$. For $b,c \in \Sig_B^*$, the function $f \in A^{(B)}$ is defined by
		\begin{align*}
		f(1) &= a, &f(c) &= a^{-1}, & f(\beta) &= 1 \quad\text{for } \beta \in B \setminus\oneset{1,c}.
		\end{align*}
		Then by \prettyref{prop:conjugacyCrit} (ii), $(b,1) \sim (b,f)$ if, and only if, $\pi_{1,b}(f) = \pi_{c,b}(f) = 1$, which is the case if, and only if, $c \in \gen{b}$. Obviously, the tuples $(b,1)$ and $(b,f)$ can be computed in \Ac0.

		Now, let $A$ be non-abelian. In particular, there are elements $a_1, a_2 \in A$ with $a_1 a_2 \neq_A a_2 a_1$. For $b,c \in \Sig_B^*$, we define two functions $f,g \in A^{(B)}$ by
		\begin{align*}
		f(1) &= a_1a_2, & & & f(\beta) &= 1 \quad\text{for } \beta \in B \setminus\oneset{1},\\
		g(1) &= a_1, &g(c) &= a_2, & g(\beta) &= 1 \quad\text{for } \beta \in B \setminus\oneset{1,c}.
		\end{align*}
		Note that in the case $c=1$, technically $g$ is not well-defined; however, the group element $a_1 a_2^c$ is a valid input which can be written down (and in this case $g(1) = g(c)= a_1a_2$), so the reduction is still defined.
		
		We have $\pi_{1,b}(f) = a_1a_2$ and  $\pi_{t,b}(f) = 1$ for $t \not \in \gen{b}$. For $g$, according to \prettyref{prop:conjugacyCrit} (iii), we have to consider 	$\pi_{1,b}^{(1)}(g)$ and $\pi_{1,b}^{(c)}(g)$.
		If $b$ has finite order, then 
		$\pi_{1,b}^{(1)}(g)$ and $\pi_{1,b}^{(c)}(g)$ are both one of $a_1a_2$ or $a_2 a_1$ (which are conjugate) if, and only if, $c \in \gen{b} =_G \oneset{b}^*$ (because $b$ has finite order)~-- otherwise $\pi_{1,b}^{(1)}(g)= a_1$ and $\pi_{1,b}^{(c)}(g)=a_2$. 
		On the other hand if $b$ has infinite order, we have \begin{align*}
		\pi_{1,b}^{(1)}(g)  &= \begin{cases}
		a_1 a_2 & \text{if } c=_B b^k \text{ with } k \geq 0,\\
		a_2 a_1 & \text{if } c=_B b^{k} \text{ with } k < 0,\\
		a_1 &  \text{otherwise},
		\end{cases} &%\\
		\pi_{1,b}^{(c)}(g) &= \begin{cases}
		a_1 a_2 & \text{if } c=_B b^k \text{ with } k \geq 0,\\
		a_2 a_1 & \text{if } c=_B b^{k} \text{ with } k < 0,\\
		a_2 &  \text{otherwise}.
		\end{cases}
		\end{align*}
		Thus, $\pi_{1,b}^{(d)}(g)= \pi_{1,b}(f)$ for some $d \in \oneset{1,c}$ if, and only if, $c=_B b^k$ with $k \geq 0$.
		Therefore, by \prettyref{prop:conjugacyCrit}, $(b,f) \sim (b,g)$ if, and only if, $c \in_G \oneset{b}^*$.
		\end{proof}

	\section{Conjugacy and Power Problem in Left-Iterated Wreath Products} \label{sec:itwreath}
	
	In order to solve the conjugacy problem in left-iterated wreath products, we also need to solve the power problem in wreath products. In general, we do not know whether the power problem in a wreath product is in \TC given that the power problem of the factors is in \TC. The issue is that when dealing with torsion it might be necessary to compute greatest common divisors~-- which is not known to be in \TC. By restricting torsion elements to have only smooth orders, we circumvent this issue. Recall that a number is called $\beta$-smooth for some $\beta \in \N$ if it only contains prime factors less than or equal to $\beta$. 
	
	\begin{lemma}\label{lem:boundedorder}
		Let $\beta \in \N$. Suppose the orders of all torsion elements in $A$ and $B$ are $\beta$-smooth. Then the orders of all torsion elements in $A \wr B$ are $\beta$-smooth.
	\end{lemma}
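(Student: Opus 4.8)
The plan is to read the order of a torsion element of $A \wr B = B \ltimes A^{(B)}$ straight off the semidirect‑product multiplication $(b,f)(c,g) = (bc,\, f^c g)$. Suppose $(b,f) \in A \wr B$ has finite order $m$. Applying the projection $A \wr B \to B$ shows that $b$ has finite order in $B$; put $N = \ord(b)$, which is $\beta$‑smooth by hypothesis. A one‑line induction on $n$ (using $(uv)^b = u^b v^b$, and keeping the factors in order since $A$ need not be abelian) gives $(b,f)^n = \bigl(b^n,\, f^{b^{n-1}} f^{b^{n-2}} \cdots f^{b} f\bigr)$ for all $n \geq 1$. In particular $(b,f)^N = (1, h)$ with $h := f^{b^{N-1}} \cdots f^{b} f \in A^{(B)}$, and since $(b,f)$ has finite order, so does $(1,h)$, hence so does $h$ as an element of $A^{(B)}$; write $L := \ord(h)$.

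Next I would use that $A^{(B)}$ is the restricted direct product of copies of $A$ indexed by $B$, so the order of the torsion element $h$ is the least common multiple of the orders of its finitely many non‑trivial coordinates: $L = \mathrm{lcm}\{\,\ord(h(x)) : x \in \supp(h)\,\}$. Each coordinate $h(x)$ is a torsion element of $A$, so $\ord(h(x))$ is $\beta$‑smooth by hypothesis, and an lcm of finitely many $\beta$‑smooth integers is again $\beta$‑smooth; hence $L$ is $\beta$‑smooth.

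Finally, because the $B$‑coordinate of $(b,f)^N$ is trivial, the action on the fibre over $1$ is trivial, so $(b,f)^{NL} = \bigl((b,f)^N\bigr)^L = (1,h)^L = (1, h^L) = (1,\mathbf{1})$. Therefore $m = \ord((b,f))$ divides $NL$, and since $N$ and $L$ are both $\beta$‑smooth, so is $NL$, and hence so is every divisor of it — in particular $m$. This proves that all torsion elements of $A \wr B$ have $\beta$‑smooth order.

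There is no real obstacle here. The only mildly delicate points are verifying the power formula $(b,f)^n = (b^n, f^{b^{n-1}} \cdots f)$ by induction (pure bookkeeping with the twisting action) and observing that $h$ — a concrete element of $A^{(B)}$, whatever order one multiplies the conjugates in — is torsion with order computed coordinatewise. Both are routine, and note that the argument only needs $\ord((b,f)) \mid NL$, not the exact value of $\ord((b,f))$.
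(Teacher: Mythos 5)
Your proof is correct and takes essentially the same route as the paper: project to $B$ to get a $\beta$-smooth $N=\ord(b)$, observe that $(b,f)^N$ lies in $A^{(B)}$ where torsion orders are lcm's of coordinate orders in $A$ and hence $\beta$-smooth, and conclude that $\ord((b,f))$ divides a product of two $\beta$-smooth numbers. The paper's version is just terser, omitting the explicit power formula and the remark that divisors of $\beta$-smooth numbers are $\beta$-smooth.
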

	
	\begin{proof}%[Proof of \prettyref{lem:boundedorder}]
		First consider a torsion element $f \in A^{(B)}$. Since $f(b)$ is a torsion element for all $b \in B$, we have $f^\ell=1$ for some $\beta$-smooth $\ell$.
		Next consider some arbitrary torsion element $(b,f) \in A \wr B$. Then $b$ is torsion as well (since $(b,f)$ projects to $b$ in $B$), and thus $b^m= 1$ for some $\beta$-smooth $m$. Consequently, $(b,f)^m \in A^{(B)}$. Hence, $(b,f)^{\ell m} = 1$ and so the order of $(b,f)$ is $\beta$-smooth.	
		\end{proof}
	
	Note that we are not aware of any finitely generated group with word problem in \TC and torsion elements whose orders are not $\beta$-smooth for any $\beta$. On the other hand, there are recursively presented such groups: for instance, take the infinite direct sum of cyclic groups of arbitrary order.

	We say $(t_1, \dots, t_m)$ is a list of \emph{$\gen{b}$-coset representatives} if the $t_i$ represent pairwise distinct $\gen{b}$-cosets.
	\begin{lemma}\label{lem:supportdecomposition}
		The following problems are in $\TC(\PP(B))$:
		\begin{enumerate}
			\item Input: a function $f=(( b_1,  a_1), \ldots, ( b_n,  a_n)) \in A^{(B)}$ and $b \in \Sig_B^*$.\\
			Output: a list of $\gen{b}$-coset representatives $(t_1, \dots, t_m)$ such that $\supp(f) \sse \oneset{t_1, \dots, t_m}\!\cdot\!\gen{b}$.\label{asdf}
			
			\item Input: a function $f=(( b_1,  a_1), \ldots, ( b_n,  a_n)) \in A^{(B)}$, $b \in \Sig_B^*$ and a list of $\gen{b}$-coset representatives $(t_1, \dots, t_m)$.\\
			Decide whether $\supp(f) \sse \oneset{t_1, \dots, t_m}\cdot\gen{b}$.\label{asdfg}

			\item\label{asdfgh} Input: a function $f=(( b_1,  a_1), \ldots, ( b_n,  a_n)) \in A^{(B)}$, $b \in \Sig_B^*$ and a list of $\gen{b}$-coset representatives $(t_1, \dots, t_m)$.\\
			Output: for each  $1 \leq i \leq m$ a list $((e_{i,1},a_{i,1}), \dots, (e_{i,n_i},a_{i,n_i}))$ with $e_{i,j} \in \Z$ (encoded in binary), $e_{i,1} < \cdots < e_{i,n_i}$ and $a_{i,j} \in \Sig_A^*$  such that
			%\vspace{-1mm}
			\begin{align*}
			\supp(f) &= \set{t_ib^{e_{i,j}}}{1\leq i \leq m, 1\leq j \leq n_i} \quad\text{ and }&f(t_jb^{e_{i,j}}) &= a_{i,j}.
			\end{align*}	
		\end{enumerate}
	\end{lemma}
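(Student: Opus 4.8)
The plan is to reduce all three tasks to a quadratic number of parallel queries to the power problem of $B$, followed by routine $\TC$ post-processing (Boolean combinations, iterated addition, sorting), exactly in the spirit of \prettyref{lem:reduce} and \prettyref{lem:piprods}. Throughout I use that $\Sig_B =_B \Sig_B^{-1}$, so that for words $u,v \in \Sig_B^*$ the word $u^{-1}v$ can be formed letter by letter in \Ac0, and I freely use the standard padding conventions (so some of the pairs $(b_i,a_i)$ may be the padding pair $(\eps,\eps)$; such pairs lie outside $\supp(f)$ and are simply ignored).

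For (i) I would first observe that two support elements $b_i,b_j$ lie in the same $\gen{b}$-coset precisely when $b_i^{-1}b_j \in_B \gen{b}$, which is answered by a single query $(b,\,b_i^{-1}b_j)$ to the power problem of $B$ (in fact the cyclic subgroup membership problem suffices here). Running these $\binom{n}{2}$ queries in parallel yields the equivalence relation ``$b_i$ and $b_j$ lie in the same coset'' on the index set, and then, as in \prettyref{lem:reduce}, one selects in \Ac0 the least index of each class and outputs the corresponding $b_i$. Since at most one index per class is chosen and the $b_i \in \supp(f)$ are pairwise distinct elements of $B$, the resulting words represent pairwise distinct $\gen{b}$-cosets and cover $\supp(f)$, as required. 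Task (ii) is even simpler: for every pair $(i,j)$ with $b_i \in \supp(f)$ and $1 \le j \le m$ one asks whether $t_j^{-1}b_i \in_B \gen{b}$; then $\supp(f) \sse \oneset{t_1,\dots,t_m}\cdot\gen{b}$ holds \IFF for every such $i$ some $j$ gives a positive answer, \ie a conjunction over $i$ of disjunctions over $j$, which is \Ac0 on top of the oracle layer (again $\CSMP(B)$ suffices).

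For (iii) we may assume $\supp(f) \sse \oneset{t_1,\dots,t_m}\cdot\gen{b}$ (this can be checked first by (ii)), so that each support element $b_\ell$ lies in a \emph{unique} coset $t_{i(\ell)}\gen{b}$. For every pair $(\ell,i)$ we issue the query $(b,\,t_i^{-1}b_\ell)$ to the power problem; in parallel this identifies $i(\ell)$ as the unique $i$ with a positive answer and returns, for that $i$, the exponent $e_\ell \in \Z$ with $t_{i}^{-1}b_\ell =_B b^{e_\ell}$, written in binary. When $b$ has infinite order $e_\ell$ is unique; when $b$ has finite order $N$ the power problem returns the representative in $\oneset{0,\dots,N-1}$, which is again canonical; in both cases distinct support elements inside one coset receive distinct exponents. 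It then remains, for each $i$, to collect the pairs $(e_\ell,a_\ell)$ over all $\ell$ with $i(\ell)=i$ and to sort them by the binary integer key $e_\ell$; comparison of binary integers is in \Ac0 and sorting is then in \TC by \prettyref{ex:sortTC}. Re-indexing the sorted list produces the required $(e_{i,1},a_{i,1}),\dots,(e_{i,n_i},a_{i,n_i})$ with $e_{i,1}<\dots<e_{i,n_i}$, and $f(t_ib^{e_{i,j}})=a_{i,j}$ holds by construction.

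Each of the three circuits is a constant-depth combination of one layer of $\PP(B)$-oracle gates with the \TC subcircuits for iterated addition and sorting, hence lies in $\TC(\PP(B))$; $\mathsf{Dlogtime}$-uniformity follows exactly as in \prettyref{lem:reduce} by numbering gates with the layer together with the relevant indices. I expect the only genuinely delicate point to be the finite-order case: one must ensure the exponents returned are canonical representatives (guaranteed by the normalisation built into the definition of $\PP$), since otherwise neither the strict inequalities $e_{i,1}<\dots<e_{i,n_i}$ nor the identity $\supp(f)=\set{t_ib^{e_{i,j}}}{1\le i\le m,\ 1\le j\le n_i}$ could be read off; everything else is bookkeeping of the kind already carried out in \prettyref{lem:reduce} and \prettyref{lem:piprods}.
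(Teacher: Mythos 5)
Your proposal is correct and follows essentially the same route as the paper's proof: parallel power-problem queries on $b_i^{-1}b_j$ (resp.\ $t_i^{-1}b_j$), selection of least-index representatives in \Ac0, a conjunction of disjunctions for the membership test, and sorting by the computed exponents as in \prettyref{ex:sortTC}. Your additional remark about canonical exponents in the finite-order case is a sensible clarification that the paper leaves implicit in the definition of $\PP$.
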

	We assume that all the words in the output of the circuit of \prettyref{lem:supportdecomposition} are encoded with the same number of bits (the number of bits is a fixed polynomial in the number of input bits depending only on the group $B$). 
	\begin{proof}%[Proof of \prettyref{lem:supportdecomposition}]
		\ref{asdf}	The first layer of the circuit decides for all $i,j \leq n$ in parallel whether $b_i^{-1}b_j \in \gen{b}$ using oracle gates for the power problem in $B$. 
		In the next layer, an element $b_i$ is included in the list of representatives $(t_1, \dots, t_m)$ if, and only if, there is no $j<i$ with $b_i^{-1}b_j \in \gen{b}$.
		
		\ref{asdfg} One simply needs to check whether for all $j\in\oneset{1, \dots n}$ there is some $i \in\oneset{1, \dots m}$ such that $t_i^{-1}b_j \in \gen{b}$  using the power problem of $B$.
		
		\ref{asdfgh} For all $i \in\oneset{1, \dots m}$ and $j\in\oneset{1, \dots n}$ one checks whether $t_i^{-1}b_j \in \gen{b}$ and, if so, computes the respective exponent $e_{i,j}$ such that $t_i^{-1}b_j= b^{e_{i,j}}$. For all $i$ and $j$ this can be done in parallel by using oracle gates to the power problem for $B$. 
		
		The next step is to sort for all $i$ in parallel the tuples $(b_j, a_j)$ with $t_i^{-1}b_j \in \gen{b}$ according to their exponent $e_{i,j}$. This can be done in \TC as described in \prettyref{ex:sortTC}. This yields the output lists.
		\end{proof}
	For the proof of \prettyref{thm:pp}, we need some more notation: 
	for $k >0$, $b \in B$, and $f \in A^{(B)}$, we define $f^{(b, k)}$ by $(b,f)^k = (b^k,f^{(b, k)})$. Then we have
	\begin{align}
	f^{(b, k)}(c) &= (f^{b^{k-1}}\cdots f^b f)(c)\qquad \text{for } c \in B. \label{eq:fbc}
	\end{align}
	\begin{lemma}\label{lem:evalF}
		Let $e_1, \dots, e_n \in \Z$ with $e_1 < \dots < e_n$ and $a_1, \dots, a_n \in A$ and $b,t \in B$.  Furthermore, let $f(tb^{e_i}) = a_i$ for $i = 1, \dots, n$ and $f(c) = 1$ for all other $c \in B$.		
		Then, for $0 < k \leq \ord(b)$, we have\begin{align*}
		f^{(b, k)}(t b^\ell) &= a_i \cdots a_{j-1}
		\end{align*}
		for $1 \leq i \leq j \leq n+1$ such that $\max\oneset{e_{j-1}, e_{i-1} + k} \leq \ell \leq \min\oneset{e_i + k-1, e_{j} - 1}$.  Here, we set $e_0 = -\infty$ and $e_{n+1} = \infty$.  Note that $a_i \cdots a_{j-1}$ is possibly the empty product. 
	\end{lemma}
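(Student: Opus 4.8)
The plan is to unfold the definition of $f^{(b,k)}$ via \prettyref{eq:fbc} and then read off which of the $a_i$ survive in the resulting product. By definition $(b,f)^k = (b^k, f^{(b,k)})$, and \prettyref{eq:fbc} says $f^{(b,k)}(c) = \bigl(f^{b^{k-1}}\cdots f^{b}f\bigr)(c) = f^{b^{k-1}}(c)\cdots f^{b}(c)\,f(c)$. Iterating the action $f^{b}(x) = f(xb^{-1})$ gives $f^{b^{m}}(x) = f(xb^{-m})$, so substituting $c = tb^{\ell}$ turns this into
\[
  f^{(b,k)}(tb^{\ell}) \;=\; f\bigl(tb^{\ell-k+1}\bigr)\,f\bigl(tb^{\ell-k+2}\bigr)\cdots f\bigl(tb^{\ell}\bigr),
\]
that is, the product of $f(tb^{p})$ over the $k$ consecutive exponents $p = \ell-k+1,\dots,\ell$, read in increasing order of $p$.

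Next I would pin down the non-trivial factors of this product. Here the hypothesis $0 < k \le \ord(b)$ does the work: a window of at most $\ord(b)$ consecutive exponents meets each power of $b$ at most once, so the elements $tb^{\ell-k+1},\dots,tb^{\ell}$ are pairwise distinct and no element of $\supp(f)$ is picked up twice (and, when $\ord(b) < \infty$, the order in which the relevant coset is traversed is unambiguous). Consequently $f(tb^{p}) = a_{s}$ exactly when $tb^{p} = tb^{e_{s}}$ and $f(tb^{p}) = 1$ otherwise, and the displayed product equals the product of those $a_{s}$ with $\ell-k+1 \le e_{s} \le \ell$, taken in increasing order of the exponent $e_{s}$.

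Finally I would translate the condition ``$e_{s}$ lies in the window'' into the stated inequalities. Because $e_{1} < \cdots < e_{n}$, the set $\set{s}{\ell-k+1 \le e_{s} \le \ell}$ is a block of consecutive indices $\oneset{i, i+1, \dots, j-1}$ (empty if $i = j$), where $i$ is the least index with $e_{i} \ge \ell-k+1$ and $j$ the least index with $e_{j} > \ell$, using the conventions $e_{0} = -\infty$ and $e_{n+1} = \infty$. Unwinding the minimality of $i$ gives $e_{i-1} + k \le \ell \le e_{i} + k - 1$, and that of $j$ gives $e_{j-1} \le \ell \le e_{j} - 1$; taken together these say precisely $\max\oneset{e_{j-1}, e_{i-1}+k} \le \ell \le \min\oneset{e_{i}+k-1, e_{j}-1}$, and the surviving product is $a_{i}a_{i+1}\cdots a_{j-1}$. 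The only step needing care is keeping the $\pm\infty$ boundary conventions straight and, in the torsion case, invoking $k \le \ord(b)$ to rule out a repeated power of $b$ in the window; the rest is a direct rewriting.
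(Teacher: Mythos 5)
Your proof is correct and follows essentially the same route as the paper's: unfold $f^{(b,k)}(tb^\ell)$ via \prettyref{eq:fbc} into the product $f(tb^{\ell-k+1})\cdots f(tb^\ell)$, identify the surviving factors as the consecutive block of indices $i,\dots,j-1$ with $e_s$ in the window $[\ell-k+1,\ell]$, and unwind the extremality of $i$ and $j$ into the stated inequalities. The only difference is that you explicitly justify, using $k\le\ord(b)$, that the window contains no repeated element of $t\gen{b}$ — a point the paper leaves implicit — which is a welcome clarification rather than a deviation.
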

	\begin{proof}%[Proof of \prettyref{lem:evalF}]
		\begin{align*}
		f^{(b, k)}(t b^\ell) &= (f^{b^{k-1}}\cdots f^b f)(tb^\ell)\\
		&= f(t b^{\ell-(k-1)}) \cdots f(t b^{\ell-1}) f(tb^\ell)\\
		%	&=  \prod_{\stackrel{\scriptstyle \ell \text{ s.t.}}{e_\ell \in [\ell-(k-1),\ell]}}\!\!\!\! f(t b^{e_\ell}) \\
		&= \prod_{\nu= i}^{j-1}f(t b^{e_\nu}) \tag{because all other $f(c)$ are trivial}
		\end{align*}
		for $i = \min\set{\nu}{e_\nu \geq \ell-(k-1)}$ and $j = \max \set{\nu}{e_\nu \leq \ell} + 1$.
		Thus, $e_{i-1} < \ell-(k-1)  \leq  e_i$ and likewise  $e_{j-1}\leq \ell \leq  e_{j} - 1$.
		\end{proof}
	
	\begin{lemma}\label{lem:computeFbk}
		The following problem is in $\TC(\PP(B))$:\\
		Input: a function $f=(( b_1,  a_1), \ldots, ( b_n,  a_n)) \in A^{(B)}$, $b,t \in \Sig_B^*$ such that $\supp(f) \sse t\gen{b}$ and $k,\ell \in \Z$ (in binary).\\
		Compute $f^{(b,k)}(tb^\ell) \in \Sig_A^*$.	
		
	\end{lemma}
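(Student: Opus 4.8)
The plan is to combine the explicit formula of \prettyref{lem:evalF} with the support decomposition of \prettyref{lem:supportdecomposition}. First, since $\supp(f) \sse t\gen{b}$, I would apply \prettyref{lem:supportdecomposition}\,\ref{asdfgh} with the single coset representative list $(t)$ (i.e.\ $m=1$, $t_1 = t$): this produces in $\TC(\PP(B))$ a list $((e_1,c_1), \dots, (e_s,c_s))$ with $e_1 < \dots < e_s$ integers in binary, $c_i \in \Sig_A^*$, $\supp(f) = \set{tb^{e_i}}{1\le i\le s}$ and $f(tb^{e_i}) = c_i$. In parallel, a single oracle call to $\PP(B)$ on the pair $(b, b^{-1})$ either returns a non-negative integer, which is then $\ord(b)-1$, or returns $-1$, certifying $\ord(b) = \infty$; this tells us which branch of \prettyref{lem:evalF} applies and supplies $N := \ord(b)$ in binary when it is finite.

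Next, suppose first that $\ord(b) = \infty$, so every input $k>0$ satisfies $k\le\ord(b)$; the finite case with $0 < k \le N$ is identical. By \prettyref{lem:evalF}, $f^{(b,k)}(tb^\ell) = c_i \cdots c_{j-1}$ for the pair $i \le j$ (with $1\le i\le j\le s+1$, $e_0 = -\infty$, $e_{s+1} = \infty$) determined by $\max\{e_{j-1}, e_{i-1}+k\} \le \ell \le \min\{e_i + k - 1, e_j - 1\}$. All quantities here are integers given in binary and the sums $e_\nu + k$ are obtained by iterated addition; hence for every $\nu$ the bits deciding $e_\nu \le \ell$, $e_\nu \le \ell - k$, etc.\ are available in \TC, and from them $i$ and $j$ are pinned down exactly as in \prettyref{ex:sortTC} (for instance $i - 1 = \abs{\set{\nu}{e_\nu \le \ell - k}}$ by iterated addition, and similarly for $j$). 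The output word $c_i c_{i+1} \cdots c_{j-1}$ is a contiguous sub-product of $c_1 \cdots c_s$, and extracting and concatenating it amounts to masking every pair outside the index range $[i, j-1]$ to $\eps$ and then applying the length-multiplying projection that deletes the $\eps$-letters, just as in the proof of \prettyref{lem:reduce}; this is in \Ac0, and uniformity is checked as before by assigning transparent gate numbers.

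It remains to deal with $\ord(b) = N < \infty$ and $k > N$, which I expect to be the delicate point. Using \TC binary division write $k = qN + r$ with $0 \le r < N$; reducing the product $\prod_{\nu = \ell - k + 1}^{\ell} f(tb^\nu)$ modulo $b^N = 1$ groups it into $q$ identical full periods followed by the last $r$ factors, which gives
\[
f^{(b,k)}(tb^\ell) \;=\; \bigl(f^{(b,N)}(tb^{\ell - r})\bigr)^{q}\cdot f^{(b,r)}(tb^\ell),
\]
where both $f^{(b,N)}(tb^{\ell-r})$ and $f^{(b,r)}(tb^\ell)$ are computed by the case already treated (the exponents $N$ and $r$ being given in binary causes no trouble for the index search above, and $N,r\le\ord(b)$). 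The one remaining operation is raising the one-period element $P := f^{(b,N)}(tb^{\ell-r}) \in \Sig_A^*$ to the binary power $q$: this is the main obstacle, since $P^q$ cannot in general be written out in polynomial space once $q$ is genuinely exponential, so the lemma is used either with $k$ bounded by $\ord(b)$, or inside the proof of \prettyref{thm:pp}, where the large full-period power is absorbed by an oracle call to the power problem of $A$ and where the hypothesis that torsion orders are $\beta$-smooth keeps the surrounding arithmetic (orders, gcd-type computations) in \TC. Everything else above is a routine assembly of \prettyref{lem:supportdecomposition}, iterated addition, the \TC sorting/counting of \prettyref{ex:sortTC}, and length-multiplying homomorphisms.
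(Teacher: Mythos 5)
Your proposal is correct and follows essentially the same route as the paper: compute the representation $((e_1,a_1),\dots,(e_n,a_n))$ of $f$ via \prettyref{lem:supportdecomposition}~(iii), then read off $f^{(b,k)}(tb^\ell)=a_i\cdots a_{j-1}$ from the inequality of \prettyref{lem:evalF}, locating $i$ and $j$ by integer comparisons and iterated addition in \TC. Your additional discussion of the case $\ord(b)=N<\infty$ with $k>N$ correctly identifies a point the paper's two-line proof passes over in silence: the formula of \prettyref{lem:evalF} requires $0<k\le\ord(b)$, and indeed every invocation of the lemma inside \prettyref{thm:pp} satisfies this bound (the exponents used there are $k<K$ and $K=\ord(b)$, or $b$ has infinite order), exactly as you conclude.
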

	
	\begin{proof}%[Proof of \prettyref{lem:evalF}]
		By \prettyref{lem:supportdecomposition}, we can compute a representation $((e_1,a_1), \dots, (e_n,a_n))$ with $e_1 < \cdots < e_n$ of $f$ such that $f(tb^{e_i}) = a_i$ for all $i$ and $f(c) = 1$, otherwise.	
		
		By \prettyref{lem:evalF}, $f^{(b, k)}(tb^{e_\nu})$ is of the form $a_i \cdots a_{j-1}$ for appropriate $i$ and $j$. The indices $i$ and $j$ can be found by evaluating the inequality  $\max\oneset{e_{j-1}, e_{i-1} + k} \leq e_\nu \leq \min\oneset{e_i + k-1, e_{j} - 1}$~-- that is a simple Boolean combination of comparisons of integers (integers can be compared in \TC \eg by subtracting them and then checking the sign). 
		\end{proof}
	
	\begin{theorem}\label{thm:pp}
		Let $\beta \in \N$ and suppose the order of every torsion element in $A$ is $\beta$-smooth. Then we have $\PP(A \wr B) \in \TC(\PP(A), \PP(B))$.
	\end{theorem}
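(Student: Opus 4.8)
The plan is to first invoke \prettyref{lem:reduce} so that the two inputs are presented as pairs $(b,f),(c,g)\in A\wr B$, and then to search for $k\in\Z$ with $(b,f)^k=(c,g)$. Since such a power projects onto $b^k$ in $B$, a necessary condition is $b^k=_B c$: one call of $\PP(B)$ to $(b,c)$ either refutes this or returns a candidate exponent $s_0$, and a call of $\PP(B)$ to $(b,b^{-1})$ tells us whether $b$ has finite order and, if so, its order $N$. After possibly replacing $(b,f)$ by its inverse (an $\Ac0$ operation) we may assume $s_0\ge 0$; if $s_0=0$ the task degenerates to checking that $(c,g)$ is trivial.

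If $\ord(b)=\infty$, then $k=s_0$ is forced and it remains to verify $f^{(b,s_0)}=g$ in $A^{(B)}$. The point requiring care is that $\supp\bigl(f^{(b,s_0)}\bigr)$ may have exponential size, so I would use \prettyref{lem:supportdecomposition} to split $\supp(f)$ into $\gen b$-cosets $t_1\gen b,\dots,t_r\gen b$, check that $\supp(g)$ is contained in their union, and then work one coset at a time. Inside a fixed coset $t_i\gen b\cong\Z$, both $\ell\mapsto g(t_ib^\ell)$ and, by \prettyref{lem:evalF}, $\ell\mapsto f^{(b,s_0)}(t_ib^\ell)$ are piecewise constant with $O(n+m)$ breakpoints; I compute their common refinement into intervals, evaluate $f^{(b,s_0)}$ on each interval via \prettyref{lem:computeFbk}, and compare the two functions interval by interval with $\WP(A)$, the two unbounded end-intervals being required to carry the value $1$. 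On success the answer is $k=s_0$.

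If $\ord(b)=N<\infty$, write $k=Nq+s_0$ with $s_0\in\oneset{0,\dots,N-1}$ the residue found above. From $(b,f)^N=(1,h)$ with $h=f^{(b,N)}$, and from the fact that $h^q$ is the pointwise $q$-th power in $A^{(B)}$, the equation $(b,f)^{Nq+s_0}=(c,g)$ becomes $h^q=(f^{(b,s_0)})^{-1}g=:g'$, i.e.\ $h(x)^q=_A g'(x)$ for all $x\in B$. As before I would decompose along the $\gen b$-cosets of $\supp(f)$ (checking that $\supp(g)$ lies inside them) and build $g'$ in the compressed form described above. On a coset $t_i\gen b$, which now has exactly $N$ elements, $\ell\mapsto h(t_ib^\ell)$ is $N$-periodic and, by \prettyref{lem:evalF}, piecewise constant with at most $n_i$ pieces, while $g'$ is piecewise constant with $O(n_i+m_i)$ pieces; refining these gives $O(n+m)$ intervals in total, on each of which $h$ is constant with value a word $v\in\Sig_A^*$ computable by \prettyref{lem:computeFbk}, and $g'$ is constant with value $w$, and $h^q=g'$ holds precisely when $v^q=_A w$ for every resulting pair $(v,w)$ (in particular, if $g'$ is non-constant on some $h$-piece the refinement produces two incompatible equations, correctly signalling that there is no solution).

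Each equation $v^q=_A w$ is an instance of $\PP(A)$, returning either ``no solution'', or a unique $q$ when $v$ is non-torsion, or a residue of $q$ modulo $\ord(v)$ when $v$ is torsion (where $\ord(v)$ is recovered from $\PP(A)$ on $(v,v^{-1})$). It remains to fuse these constraints. If some constraint is of non-torsion type, all such constraints must prescribe the same value $q_0$, which must moreover be compatible with every torsion congruence; then $(b,f)$ has infinite order and $k=Nq_0+s_0$ is the unique answer. Otherwise all $v$ are torsion, hence $h$ and $(b,f)$ have finite order, the solution set is $q\equiv q^*\pmod D$ with $D$ the least common multiple of the moduli $\ord(v)$, and the smallest non-negative $k$ is $N\,(q^*\bmod D)+s_0$, which is finally written in binary by iterated addition and multiplication. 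This fusion step is the one that uses the hypothesis: every $\ord(v)$ occurring here divides $\ord(h)$, which is $\beta$-smooth as in the proof of \prettyref{lem:boundedorder}, and Chinese remaindering together with the needed gcd and lcm computations are in \TC for $\beta$-smooth moduli (extract, for each of the boundedly many primes $p\le\beta$, the $p$-adic valuations by iterated division, then take minima and maxima), whereas the gcd of general integers is not known to lie in \TC. The principal obstacle throughout is to keep every object~-- the various supports, the functions $f^{(b,k)}$, $h$ and $g'$~-- in the compressed ``coset plus piecewise-constant-exponent'' encoding, so that nothing of exponential length is ever produced explicitly; this is exactly what \prettyref{lem:supportdecomposition}, \prettyref{lem:evalF} and \prettyref{lem:computeFbk} are designed to support.
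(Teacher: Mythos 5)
Your proposal is correct and follows essentially the same route as the paper's proof: a call to $\PP(B)$ fixes the exponent modulo $\ord(b)$, Lemmas \ref{lem:supportdecomposition}, \ref{lem:evalF} and \ref{lem:computeFbk} keep $f^{(b,k)}$ in a piecewise-constant encoding on each $\gen{b}$-coset so that comparison with $g$ reduces to polynomially many test intervals, and in the finite-order case the pointwise $\PP(A)$ answers are fused into a system of congruences solved by Chinese remaindering over $\beta$-smooth moduli. The only deviations are cosmetic and harmless: you factor $(b,f)^{Nq+s_0}$ as $(b,f)^{s_0}(b,f)^{Nq}$, which avoids the conjugation by $b^{-k}$ that appears in the paper's reduction, and you verify $f^{(b,s_0)}=g$ via a common refinement of breakpoints instead of the paper's two-step check (value agreement on $\supp(g)$ followed by $\supp(f^{(b,s_0)})\subseteq\supp(g)$).
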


	Roughly the  proof of \prettyref{thm:pp} works as follows: on input $(b,f)$ and $(c,g)$ first apply the power problem in $B$ to $b$ and $c$. If there is no solution, then there is also no solution for $(b,f)$ and $(c,g)$. Otherwise, the smallest $k\geq 0$ with $b^k=_Bc$ can be computed. If $b$ has infinite order, it remains to check whether $(b,f)^k = (c,g)$. Since $k$ might be too large, this cannot be done by simply applying the word problem. Nevertheless, we only need to establish equality of functions in $A^{(B)}$. We show that it suffices to check equality on certain (polynomially many) ``test points''.
	In the case that $b$ has finite order $K$, we know that if there is a solution to the power problem it must be in $k + K\Z$. Now, similar techniques as in the infinite order case can be applied to find the solution.

	\begin{proof}%[Proof of \prettyref{thm:pp}]
		By \prettyref{lem:reduce}, we may assume that the input is given as two pairs $(b,f)$ and $(c,g)$. We aim to compute some $k$ such that  $(b,f)^k = (c,g)$ if there exists such $k$. We describe a circuit in several stages. It will use oracle gates for $\PP(A)$, $\PP(B)$ as well as sorting in \TC and integer arithmetic. As in the previous proofs it is straightforward to assign gate numbers such that on input of two gate numbers it can be decided in linear time whether there is a wire connecting them. 
		As a first step, the power problem in $B$ is applied to determine whether there is some $k$ with $b^k= c $. If there the answer is ``no'', then the over all answer is ``no''. Otherwise, we distinguish the two cases that $b$ is of finite order and that $b$ is of infinite order (which can be distinguished by using the power problem). 		
		
		First assume that $b$ has infinite order. Let $k$ be the answer for the power problem in $b$ and $c$, \ie $k$ is the unique integer with $b^{k}=_B c$.  Now, it remains to check whether $(b,f)^k = (c,g)$. This cannot be done by simply applying the word problem because $k$ might be exponentially large (we know that it is bounded by some exponential function because it can be computed in \TC). Without loss of generality, we may assume that $k > 0$. Indeed, if $k<0$, we can replace $(c,g)$ by $(c,g)^{-1}$ and, if $k=0$, we only need to check whether $g=0$ in order to establish $(b,f)^k = (c,g)$.

		Since $k>0$, by \prettyref{eq:fbc}, we have $(b,f)^k =  (b^k,f^{(b,k)}) $ where $f^{(b,k)} = f^{b^{k-1}}\cdots f^b f$~-- thus, we have to compare $f^{(b,k)}$ and $g$ for equality in $A^{(B)}$.
		By \prettyref{lem:supportdecomposition} \ref{asdf}, a list of $\gen{b}$-coset representatives $(t_1, \dots, t_m)$ can be computed in $\TC(\PP(B))$ such that $\supp(f) \sse \oneset{t_1, \dots, t_m} \cdot \gen{b}$. Because of \prettyref{eq:fbc}, also $\supp(f^{(b,k)}) \sse \oneset{t_1, \dots, t_m} \cdot \gen{b}$. Thus, if $\supp(g) \not\sse \oneset{t_1, \dots, t_m} \cdot \gen{b}$ (which can be checked in $\TC(\PP(B))$ by \prettyref{lem:supportdecomposition} \ref{asdfg}), then $f^{(b,k)} \neq g$.
		
		Because $f^{(b, k)}  = g$ if, and only if, they agree on every $b$-coset, we can assume that $\supp(f), \supp(g) \sse t\gen{b}$ for some $t \in B$~-- the general case is then simply a conjunction over all coset representatives. 
		By \prettyref{lem:supportdecomposition} \ref{asdfgh}, we can compute representations $((e_1,a_1), \dots, (e_n,a_n))$ with $e_1 < \cdots < e_n$ (resp.\ $((e'_1,a'_1), \dots, (e'_n,a'_{n'}))$ with $e'_1 < \cdots < e'_{n'}$) of $f$ (resp.\ $g$) such that $f(tb^{e_i}) = a_i$ for all $i$ and $f(c) = 1$, otherwise (and likewise for $g$).

		\prettyref{lem:evalF} allows us to compare $f^{(b, k)}$ and $g$ for equality. We do this in two steps: first we check for all $tb^{e'_\nu} \in \supp(g)$ (\ie for $\nu = 1, \dots, n'$) whether $f^{(b, k)}(tb^{e'_\nu})  =_A g(tb^{e'_\nu})$. We can find $f^{(b, k)}(tb^{e'_\nu})$ by \prettyref{lem:computeFbk}. Now it remains to check whether $f^{(b, k)}(tb^{e'_\nu}) =_A a'_\nu= g(tb^{e'_\nu})$ using oracle gates for the word problem of $A$.  For all $\nu$ this can be done in parallel. 
		
		At this point, we know that $f^{(b, k)}$ and $g$ agree on $\supp(g)$. The second step is to check that $\supp(f^{(b, k)}) \sse \supp(g)$. Since $\supp(f^{(b, k)})$ might be exponentially large, we have to use a different strategy than a point-wise check. Instead, we do the following for all $1 \leq i \leq j \leq n+1$ in parallel:

		\begin{itemize}
			\item Check whether $a_i \cdots a_{j-1}=_A 1$ (can be checked with oracle gates for $\WP(A)$). If not, then there are two possibilities:
			
			\item  If $\min\oneset{e_i + k-1, e_{j} - 1} - \max\oneset{e_{j-1}, e_{i-1} + k} > n$, then by \prettyref{lem:evalF} $\abs{\supp(f^{(b, k)})} > n \geq \abs{\supp(g)}$; thus, we know that $f^{(b, k)} \neq g$.
			\item Otherwise, test for all $\ell$ satisfying $\max\oneset{e_{j-1}, e_{i-1} + k} \leq \ell\leq \min\oneset{e_i + k-1, e_{j} - 1}$ whether there is some $\nu$ with  $\ell = e'_{\nu}$ (since it is a simple disjunction over equality tests of integers, it can be done in \TC). If there is some $\ell$ which is not equal to any $e'_{\nu}$, then $\supp(f^{(b, k)}) \not\sse \supp(g)$. 
			
		\end{itemize}
		If none of the above cases refutes that $f^{(b,k)} = g$, then we know that indeed $f^{(b,k)} = g$.\bigskip		
		
		\newcommand{\ordd}{K}
		
		Now, let $b$ have finite order $\ordd$ and let $0\leq k < \ordd$ with $b^{k} = c$~-- \ie $k$ is the solution to the power problem for $b$ and $c$. As remarked before, also $\ordd$ can be computed by using the power problem for $B$. We have $(b,f)^\ordd \in A^{(B)}$. Moreover, $b^{k'} = c$ for $k' \in \Z$ if, and only if, $k' \equiv k \mod \ordd$.
		Thus, there is a solution to the power problem if, and only if, there is some $\ell \in \Z$ with $((b,f)^\ordd)^\ell (b,f)^{k} = (c,g)$. In other words it remains to solve the power problem for $(b,f)^\ordd$ and $(c,g)(b,f)^{-k}$. We can simplify the latter element as follows
		\begin{align*}\allowdisplaybreaks
		(c,g)(b,f)^{-k} &= (c,g)\left((b,f)^{k}\right)^{-1} = (c,g)\left(b^{k},f^{(b,k)}\right)^{-1}  \\ &= (c,g)\left(b^{-k},{\left((f^{(b,k)})^{b^{-k}}\right)}^{-1}\right)  
		\\ &= \left(cb^{-k},g^{b^{-k}}\cdot {\left((f^{(b,k)})^{b^{-k}}\right)}^{-1}\right)= \left(1,{\left(g \cdot (f^{(b,k)})^{-1}\right)}^{b^{-k}}\right)
		\end{align*}
		and we see that we have to solve the power problem for $f^{(b,\ordd)}$ and $g^{b^{-k}} \cdot {\left((f^{(b,k)})^{b^{-k}}\right)}^{-1}$ in $A^{(B)}$.
		Note that since the numbers $k$ and $\ordd$ might be exponential in the input size, these group elements cannot be written down completely inside the polynomial size circuit.

		We start as in the infinite order case:
		by \prettyref{lem:supportdecomposition} \ref{asdf}, a list of $\gen{b}$-coset representatives $(t_1, \dots, t_m)$ with $\supp(f) \sse \oneset{t_1, \dots, t_m} \cdot \gen{b}$ can be computed in \TC. Because of \prettyref{eq:fbc}, also $\supp(f^{(b,\ordd)}) \sse \oneset{t_1, \dots, t_m} \cdot \gen{b}$~-- thus, again, if $\supp(g) \not\sse \oneset{t_1, \dots, t_m} \cdot \gen{b}$ (which can be checked in $\TC(\PP(B))$ by \prettyref{lem:supportdecomposition} \ref{asdfg}), then we already know that $((b,f)^\ordd)^\ell (b,f)^{k} \neq (c,g)$ for any $\ell$.
		
		In the following, we assume again that $\supp(f), \supp(g) \sse t\gen{b}$ for some $t \in B$~-- the set of solutions to the general case is the intersection over the solution sets for all coset representatives. In the end we will show how to compute this intersection.
		By \prettyref{lem:supportdecomposition} \ref{asdfgh}, we can compute representations $((e_1,a_1), \dots, (e_n,a_n))$ with $e_1 < \cdots < e_n$ (resp.\ $((e'_1,a'_1), \dots, (e'_n,a'_{n'}))$ with $e'_1 < \cdots < e'_{n'}$) of $f$ (resp.\ $g$) such that $f(tb^{e_i}) = a_i$ and $f(c) = 1$, otherwise (and likewise for $g$).

		We have to solve the power problem for $f^{(b,\ordd)}(tb^\ell)$ and $g(tb^{\ell + k})\cdot f^{(b,k)}(tb^{\ell + k})^{-1}$ for all $\ell$. Since again there might be too many points in the support of $f^{(b,\ordd)}$, we have to restrict to certain test points.
		
		By \prettyref{lem:evalF}, for each $\gen{b}$-coset intersecting $\supp(f)$ there are lists $\gamma_0, \dots, \gamma_\nu \in \Z$ and $\alpha_1, \dots, \alpha_\nu \in \Sig_A^*$ with $f^{(b,\ordd)}(tb^\ell) = \alpha_i$ for all $\gamma_{i-1} < \ell \leq \gamma_i$ and $\gamma_0 + \ordd = \gamma_\nu$ (with $\nu \leq 2n + 1$). The numbers $\gamma_i$ can be computed in \TC like in \prettyref{lem:computeFbk}. Moreover, we can compute similar lists $\gamma'_0, \dots, \gamma'_{\nu'} \in \Z$, $\alpha'_1, \dots, \alpha'_{\nu'} \in \Sig_A^*$ for $f^{(b,k)}$ and $\gamma''_0, \dots, \gamma''_{\nu''} \in \Z$, $\alpha''_1, \dots, \alpha''_{\nu''} \in \Sig_A^*$ for $g$. Now, it suffices to solve the power problem for $ f^{(b,\ordd)}(tb^\ell) = \alpha_{i_\ell} $ (where $i_\ell$ is such that $\gamma_{i_\ell-1} < \ell \leq \gamma_{i\ell}$) and $(g(tb^\ell)\cdot f^{(b,k)}(tb^\ell)^{-1})^{b^{-k}}$ for all 
		$$\ell \in \oneset{\gamma_0, \dots, \gamma_\nu, \gamma'_0 - k, \dots, \gamma'_{\nu'}- k, \gamma''_0 - k, \dots, \gamma''_{\nu''} - k}=:\Gamma.$$
		This is because for increasing $\ell$, the values $f^{(b,k)}(tb^\ell)$, $g(tb^\ell)^{b^{-k}}$, and $f^{(b,K)}(tb^\ell)^{b^{-k}}$ only change at these points. The functions can be evaluated in $\TC(\PP(B))$ by \prettyref{lem:computeFbk}, then oracle gates for $\PP(B)$ are used.
		
		For $\ell \in \Gamma$, let $K_\ell$ denote the order of $\alpha_{i_\ell}$ and let $k_\ell\in \Z$ such that $\alpha_{i_\ell} ^{k_\ell} = (g(tb^\ell)\cdot f^{(b,k)}(tb^\ell)^{-1})^{b^{-k}}$ (\ie the solution to the power problem). 
		We obtain a system of congruences
		$$x \equiv k_\ell \mod \ordd_\ell$$
		(here congruent modulo $\infty$ means equality). Since the $\ordd_\ell$ are all $\beta$-smooth, they can be factored in \TC and a solution (if there is one) of this system can be determined in \TC (see \eg \cite[Lem.\ 27]{Weiss16}) with the help of Hesse's division circuit \cite{hesse01,HeAlBa02} using the Chinese remainder theorem. 
		
		We do this also for all coset representatives in parallel. In the end, we either see that $g^{b^{-k}} {\left((f^{(b,k)})^{b^{-k}}\right)}^{-1}$ is not a power of $f^{(b,\ordd)}$, or we obtain a list of solutions $x_1, \dots, x_{m} \in \Z$, which give rise to a system of congruences which can be solved like in the preceding paragraph.
		\end{proof}

	By repeated application of \prettyref{thm:cpwreath}, \prettyref{lem:boundedorder}, and \prettyref{thm:pp}, we obtain the first statement of \prettyref{cor:itwrCP} below. The second statement follows since the Magnus embedding preserves conjugacy \cite{RemSok} (that means two elements are conjugate in the free solvable group if, and only if, their images under the Magnus embedding are conjugate). 
	
	\begin{corollary}\label{cor:itwrCP}
		Let $A$ and $B$ be f.\,g.\ abelian groups and let $d \geq 1$. The conjugacy problem of $A\mathop{^d\wr}B$ is in \TC. Also, the conjugacy problem of free solvable groups is in \TC.
	\end{corollary}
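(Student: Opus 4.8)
The plan is to establish both statements by a single induction on $d$, carrying the power problem alongside the conjugacy problem. Write $G_1 = A\wr B$ and $G_d = A\wr G_{d-1}$ for $d\ge 2$, so that $G_d = A\mathop{^d\wr}B$, and fix once and for all a constant $\beta\in\N$ larger than every prime dividing the (finite) exponents of the torsion subgroups of $A$ and of $B$; then the orders of all torsion elements of both $A$ and $B$ are $\beta$-smooth. I claim that for every $d\ge 1$ the following hold: \textbf{(a)} $\CP(G_d)\in\TC$; \textbf{(b)} $\PP(G_d)\in\TC$; \textbf{(c)} every torsion element of $G_d$ has $\beta$-smooth order. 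The corollary is then precisely \textbf{(a)}.

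For the base case $d=1$, note that $A$ and $B$, being \fg abelian, have $\WP(A),\WP(B)\in\TC$ by \prettyref{ex:abelian}, and since conjugation is trivial in an abelian group ($u\sim v$ iff $u=v$) also $\CP(A),\CP(B)\in\TC$; moreover $\PP(A),\PP(B)\in\TC$ by \prettyref{lem:distorsion} together with the remark following it (cyclic subgroups of abelian groups are undistorted and the torsion subgroup is finite). Hence \prettyref{thm:pp}, whose only hypothesis is $\beta$-smoothness of the torsion of the \emph{first} factor $A$, gives $\PP(G_1)\in\TC(\PP(A),\PP(B))=\TC$; \prettyref{thm:cpwreath} gives $\CP(G_1)\in\TC(\CP(A),\CP(B),\PP(B))=\TC$; and \prettyref{lem:boundedorder} gives \textbf{(c)}. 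For the inductive step, assume \textbf{(a)}--\textbf{(c)} for $d-1$ and apply the same three results to $G_d=A\wr G_{d-1}$: \prettyref{thm:pp} (its hypothesis on the first factor $A$ still holds, since that factor is again $A$) yields $\PP(G_d)\in\TC(\PP(A),\PP(G_{d-1}))=\TC$; the first bullet of \prettyref{thm:cpwreath} — note that $G_{d-1}$ is in general non-abelian, which is exactly why the power problem, and not merely $\CSMP$, of the inner factor is needed — yields $\CP(G_d)\in\TC(\CP(A),\CP(G_{d-1}),\PP(G_{d-1}))=\TC$; and \prettyref{lem:boundedorder} yields \textbf{(c)}. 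Here the size and majority depth of the circuits grow with $d$, but this is harmless since $d$ is a fixed constant (and unavoidable in view of the discussion around \prettyref{qu:majoritydepth}).

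For the statement on free solvable groups, recall that the iterated Magnus embedding $\phi$ realizes $S_{d,r}$ as a subgroup of $\Z^r\mathop{^d\wr}1$ and that it preserves conjugacy \cite{Magnus39,RemSok}, so that two words $u,v$ over the generators of $S_{d,r}$ are conjugate in $S_{d,r}$ iff $\phi(u)\sim\phi(v)$ in $\Z^r\mathop{^d\wr}1$. Since $\Z^r$ and the trivial group are \fg abelian, the first statement gives $\CP(\Z^r\mathop{^d\wr}1)\in\TC$; since $\phi$ is a homomorphism of \fg monoids, it is \TC-computable \cite{LangeM98}; composing this reduction with the circuit for $\CP(\Z^r\mathop{^d\wr}1)$ shows $\CP(S_{d,r})\in\TC$.

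The substantive content is entirely contained in \prettyref{thm:pp} and \prettyref{thm:cpwreath}, which are used here as black boxes; granted those, the induction is routine, so the ``hard part'' has already been done. The only subtlety worth flagging is bookkeeping: one must fix the smoothness bound $\beta$ uniformly in advance from $A$ and $B$ so that \prettyref{lem:boundedorder} and the hypothesis of \prettyref{thm:pp} can be invoked at every level, and one must observe that in the left-iterated product the outermost first factor is always $A$ itself, so that this hypothesis of \prettyref{thm:pp} is automatically met at each step of the recursion.
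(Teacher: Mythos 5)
Your proof is correct and follows essentially the same route as the paper, which obtains the first statement ``by repeated application of \prettyref{thm:cpwreath}, \prettyref{lem:boundedorder}, and \prettyref{thm:pp}'' and the second via the conjugacy-preserving Magnus embedding; you merely spell out the induction (carrying $\CP$, $\PP$, and $\beta$-smoothness together) that the paper leaves implicit. Your observation that the smoothness hypothesis of \prettyref{thm:pp} concerns only the outer factor $A$ is accurate and a nice piece of bookkeeping, though invoking \prettyref{lem:boundedorder} as you (and the paper) do is harmless.
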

	\begin{remark}
		In \prettyref{cor:itwrCP}, $A$ and $B$ can also be chosen to be a solvable Baumslag-Solitar group \BS1q since the power problem is in \TC by \prettyref{ex:bs}, the conjugacy problem is in \TC by \cite{DiekertMW14}, and these groups are torsion-free. 
		
		Moreover, in \cite{MyasnikovW17} it is shown that also nilpotent groups have power problem and conjugacy problem in \TC and that the orders of torsion elements are uniformly bounded. Thus, also iterated wreath products of nilpotent groups have conjugacy problem in \TC.
	\end{remark}
	
	\section{Conclusion and Open Problem}\label{sec:conclusion}
	As already discussed in \prettyref{qu:majoritydepth}, an important open problem is the dependency of the depth of the circuits for the word problem on the solvability degree.
	
	We have seen how to solve the conjugacy problem in a wreath product in \TC with oracle calls to the conjugacy problems of both factors and the power problem (resp.\ cyclic submonoid/subgroup membership problem) in the second factor. However, we do not have a reduction from the power problem in the second factor to the conjugacy problem in the wreath product: even if $A$ is non-abelian, we only know that the cyclic submonoid membership problem is necessary to solve the conjugacy problem in the wreath product.
	\begin{question}
		Is $\CP(A \wr B) \in \TC(\CP(A),\CP(B),\CMMP(B))$ in general? 
	\end{question}
	
	For iterated wreath products we needed the power problem to be in \TC in order to show that the conjugacy problem is in \TC. One reason was that we only could reduce the power problem in the wreath product to the power problems of the factors. However, we have seen that in torsion-free groups, we do not need the power problem to solve conjugacy, as the cyclic submonoid membership problem is sufficient. Therefore, it would be interesting to reduce the cyclic submonoid membership problem in a wreath product to the same problem in its factors.
	\begin{question}
		Is $\CMMP(A \wr B) \in \TC(\CMMP(A),\CMMP(B))$ or similarly is $\CSMP(A \wr B) \in \TC(\CSMP(A),\CSMP(B))$? 
	\end{question}

	In \cite{GulSU17}, Gul, Sohrabi, and Ushakov generalized Matthews result by considering the relation between the conjugacy problem in $F/N$ and the power problem in $F/N'$, where $F$ is a free group with a normal subgroup $N$ and $N'$ is its derived subgroup. They show that $\CP(F/N')$ is polynomial-time-Turing-reducible to $\CSMP(F/N)$ and $\CSMP(F/N)$ is Turing-reducible to $\CP(F/N')$ (no complexity bound). Moreover, they establish that $\WP(F/N')$ is polynomial-time-Turing-reducible to $\WP(F/N)$.
	
	\begin{question}
		What are the precise relations in terms of complexity between
		$\CP(F/N')$ and  $\CSMP(F/N)$ resp.\ $\WP(F/N')$ and $\WP(F/N)$?
	\end{question}

	%\bibliography{LowComplexity}

\begin{thebibliography}{10}
		\providecommand{\url}[1]{{#1}}
		\providecommand{\urlprefix}{URL }
		\expandafter\ifx\csname urlstyle\endcsname\relax
		\providecommand{\doi}[1]{DOI~\discretionary{}{}{}#1}\else
		\providecommand{\doi}{DOI~\discretionary{}{}{}\begingroup
			\urlstyle{rm}\Url}\fi
		
		\bibitem{BarringtonIS90}
		Barrington, D.A.M., Immerman, N., Straubing, H.: On uniformity within
		{NC\({^1}\)}.
		\newblock J. Comput. Syst. Sci. \textbf{41}(3), 274--306 (1990).
		\newblock \doi{10.1016/0022-0000(90)90022-D}.
		\newblock \urlprefix\url{http://dx.doi.org/10.1016/0022-0000(90)90022-D}
		
		\bibitem{CravenJ12}
		Craven, M.J., Jimbo, H.C.: Evolutionary algorithm solution of the multiple
		conjugacy search problem in groups, and its applications to cryptography.
		\newblock Groups Complexity Cryptology \textbf{4}, 135--165 (2012)
		
		\bibitem{Dehn11}
		Dehn, M.: \"{U}ber unendliche diskontinuierliche {G}ruppen.
		\newblock Math. Ann. \textbf{71}(1), 116--144 (1911).
		\newblock \doi{10.1007/BF01456932}.
		\newblock \urlprefix\url{http://dx.doi.org/10.1007/BF01456932}
		
		\bibitem{DiekertMW14}
		Diekert, V., Myasnikov, A.G., Wei{\ss}, A.: Conjugacy in {B}aumslag's {G}roup,
		{G}eneric {C}ase {C}omplexity, and {D}ivision in {P}ower {C}ircuits.
		\newblock In: Latin American Theoretical Informatics Symposium, pp. 1--12
		(2014)
		
		\bibitem{GrigorievS09}
		Grigoriev, D., Shpilrain, V.: Authentication from matrix conjugation.
		\newblock Groups Complexity Cryptology \textbf{1}, 199--205 (2009)
		
		\bibitem{GulSU17}
		Gul, F., Sohrabi, M., Ushakov, A.: Magnus embedding and algorithmic properties
		of groups {$F/N^{(d)}$}.
		\newblock Trans. Amer. Math. Soc. \textbf{369}(9), 6189--6206 (2017).
		\newblock \doi{10.1090/tran/6880}.
		\newblock \urlprefix\url{http://dx.doi.org/10.1090/tran/6880}
		
		\bibitem{hesse01}
		Hesse, W.: Division is in uniform {TC}$^{0}$.
		\newblock In: F.~Orejas, P.G. Spirakis, J.~van Leeuwen (eds.) ICALP,
		\emph{Lecture Notes in Computer Science}, vol. 2076, pp. 104--114. Springer
		(2001)
		
		\bibitem{HeAlBa02}
		Hesse, W., Allender, E., Barrington, D.A.M.: Uniform constant-depth threshold
		circuits for division and iterated multiplication.
		\newblock JCSS \textbf{65}, 695--716 (2002)
		
		\bibitem{KargapolovR66}
		Kargapolov, M.I., Remeslennikov, V.N.: The conjugacy problem for free solvable
		groups.
		\newblock Algebra i Logika Sem. \textbf{5}(6), 15--25 (1966)
		
		\bibitem{KoLCHKP00}
		Ko, K.H., Lee, S.J., Cheon, J.H., Han, J.W., Kang, J.s., Park, C.: New
		public-key cryptosystem using braid groups.
		\newblock In: Advances in cryptology---{CRYPTO} 2000 ({S}anta {B}arbara, {CA}),
		\emph{Lecture Notes in Comput. Sci.}, vol. 1880, pp. 166--183. Springer,
		Berlin (2000).
		\newblock \doi{10.1007/3-540-44598-6_10}.
		\newblock \urlprefix\url{http://dx.doi.org/10.1007/3-540-44598-6_10}
		
		\bibitem{KoenigL17}
		K{\"o}nig, D., Lohrey, M.: Evaluation of circuits over nilpotent and polycyclic
		groups.
		\newblock Algorithmica  (2017).
		\newblock \doi{10.1007/s00453-017-0343-z}.
		\newblock \urlprefix\url{https://doi.org/10.1007/s00453-017-0343-z}
		
		\bibitem{KrebsLR07}
		Krebs, A., Lange, K., Reifferscheid, S.: Characterizing {TC}$^{0}$ in terms of
		infinite groups.
		\newblock Theory Comput. Syst. \textbf{40}(4), 303--325 (2007).
		\newblock \doi{10.1007/s00224-006-1310-2}.
		\newblock \urlprefix\url{http://dx.doi.org/10.1007/s00224-006-1310-2}
		
		\bibitem{LangeM98}
		Lange, K., McKenzie, P.: On the complexity of free monoid morphisms.
		\newblock In: K.~Chwa, O.H. Ibarra (eds.) Algorithms and Computation, 9th
		International Symposium, {ISAAC} '98, Taejon, Korea, December 14-16, 1998,
		Proceedings, \emph{Lecture Notes in Computer Science}, vol. 1533, pp.
		247--256. Springer (1998).
		\newblock \doi{10.1007/3-540-49381-6_27}.
		\newblock \urlprefix\url{http://dx.doi.org/10.1007/3-540-49381-6_27}
		
		\bibitem{MacielT98}
		Maciel, A., Th{\'{e}}rien, D.: Threshold circuits of small majority-depth.
		\newblock Inf. Comput. \textbf{146}(1), 55--83 (1998).
		\newblock \doi{10.1006/inco.1998.2732}.
		\newblock \urlprefix\url{http://dx.doi.org/10.1006/inco.1998.2732}
		
		\bibitem{Magnus39}
		Magnus, W.: On a theorem of {M}arshall {H}all.
		\newblock Ann. of Math. (2) \textbf{40}, 764--768 (1939)
		
		\bibitem{Matthews66}
		Matthews, J.: The conjugacy problem in wreath products and free metabelian
		groups.
		\newblock Transaction of the American Math Society \textbf{121}, 329--339
		(1966)
		
		\bibitem{MiasnikovVW17}
		Miasnikov, A., Vassileva, S., Wei{\ss}, A.: The conjugacy problem in free
		solvable groups and wreath products of abelian groups is in $\mathsf{TC}^0$.
		\newblock In: P.~Weil (ed.) Computer Science - Theory and Applications - 12th
		International Computer Science Symposium in Russia, {CSR} 2017, Kazan,
		Russia, June 8-12, 2017, Proceedings, \emph{Lecture Notes in Computer
			Science}, vol. 10304, pp. 217--231. Springer (2017).
		\newblock \doi{10.1007/978-3-319-58747-9_20}.
		\newblock \urlprefix\url{https://doi.org/10.1007/978-3-319-58747-9_20}
		
		\bibitem{Miller1}
		{Miller III}, C.F.: {On group-theoretic decision problems and their
			classification}, \emph{Annals of Mathematics Studies}, vol.~68.
		\newblock Princeton University Press (1971)
		
		\bibitem{MyasnikovRUV10}
		Myasnikov, A., Roman'kov, V., Ushakov, A., Vershik, A.: The word and geodesic
		problems in free solvable groups.
		\newblock Trans. Amer. Math. Soc. \textbf{362}(9), 4655--4682 (2010).
		\newblock \doi{10.1090/s0002-9947-10-04959-7}.
		\newblock \urlprefix\url{http://dx.doi.org/10.1090/s0002-9947-10-04959-7}
		
		\bibitem{MyasnikovW17}
		{Myasnikov}, A., {Wei{\ss}}, A.: {TC$^0$ circuits for algorithmic problems in
			nilpotent groups}.
		\newblock ArXiv e-prints  (2017)
		
		\bibitem{MiasnikovVW16}
		Myasnikov, A.G., Vassileva, S., Wei{\ss}, A.: Log-space complexity of the
		conjugacy problem in wreath products. To appear
		
		\bibitem{RemSok}
		Remeslennikov, V., Sokolov, V.G.: Certain properties of the magnus embedding.
		\newblock Algebra i logika \textbf{9(5)}, 566--578 (1970)
		
		\bibitem{Robinson93phd}
		Robinson, D.: Parallel algorithms for group word problems.
		\newblock Ph.D. thesis, University of California, San Diego (1993)
		
		\bibitem{SZ1}
		{Shpilrain}, V., {Zapata}, G.: Combinatorial group theory and public key
		cryptography.
		\newblock Appl. Algebra Engrg. Comm. Comput. \textbf{17}, 291--302 (2006)
		
		\bibitem{Vassileva11}
		Vassileva, S.: Polynomial time conjugacy in wreath products and free solvable
		groups.
		\newblock Groups Complex. Cryptol. \textbf{3}(1), 105--120 (2011).
		\newblock \doi{10.1515/GCC.2011.005}.
		\newblock \urlprefix\url{http://dx.doi.org/10.1515/GCC.2011.005}
		
		\bibitem{Vollmer99}
		Vollmer, H.: Introduction to Circuit Complexity.
		\newblock Springer, Berlin (1999)
		
		\bibitem{Waack90}
		Waack, S.: The parallel complexity of some constructions in combinatorial group
		theory.
		\newblock In: Proceedings on Mathematical Foundations of Computer Science 1990,
		MFCS '90, pp. 492--498. Springer-Verlag New York, Inc., New York, NY, USA
		(1990).
		\newblock \urlprefix\url{http://dl.acm.org/citation.cfm?id=88581.90249}
		
		\bibitem{WangWCO11}
		Wang, L., Wang, L., Cao, Z., Okamoto, E., Shao, J.: New constructions of
		public-key encryption schemes from conjugacy search problems.
		\newblock In: Information security and cryptology, \emph{Lecture Notes in
			Comput. Sci.}, vol. 6584, pp. 1--17. Springer, Heidelberg (2011).
		\newblock \doi{10.1007/978-3-642-21518-6_1}.
		\newblock \urlprefix\url{http://dx.doi.org/10.1007/978-3-642-21518-6_1}
		
		\bibitem{Weiss16}
		Wei\ss, A.: A logspace solution to the word and conjugacy problem of
		generalized {B}aumslag-{S}olitar groups.
		\newblock In: Algebra and computer science, \emph{Contemp. Math.}, vol. 677,
		pp. 185--212. Amer. Math. Soc., Providence, RI (2016)
		
	\end{thebibliography}

\end{document}